\documentclass[aps,prb,twocolumn,superscriptaddress,nofootinbib,longbibliography]{revtex4-2}
\DeclareMathAlphabet{\mathbbold}{U}{bbold}{m}{n}
\usepackage{amsmath,amssymb,amsthm}
\usepackage{graphicx}
\usepackage{booktabs}
\usepackage[hidelinks]{hyperref}

\newcommand{\Om}{\Omega}
\newcommand{\BZ}{\mathrm{BZ}}
\newcommand{\dd}{\mathrm{d}}
\newcommand{\kxy}{\kappa_{xy}}
\newcommand{\Var}{\operatorname{Var}}
\newcommand{\Cov}{\operatorname{Cov}}

\newcommand{\kB}{k_{B}}
\newcommand{\e}[1]{\times10^{#1}}

\newtheorem{theorem}{Theorem}
\newtheorem{lemma}[theorem]{Lemma}
\newtheorem{proposition}[theorem]{Proposition}

\begin{document}

\title{Extracting Chern numbers from the equilibrium transport noise of
multiband bosons}

\author{Zhi-Wei Wang}
\affiliation{College of Physics, Jilin University, Changchun 130012, China}
\author{Samuel L.\ Braunstein}
\affiliation{Department of Computer Science, University of York, York YO10 5GH, United Kingdom}

\date{\today}

\begin{abstract}
We show that the equilibrium transport noise of a multiband bosonic
system carries a measurable imprint of its Chern numbers, and we give
the protocol that extracts them. For electrons the topological part of
the Berry curvature is absent from the transport noise: over a filled
band it yields the quantized anomalous Hall conductivity, yet at a Fermi
surface the Chern number is invisible to the fluctuations of the Hall
current. That cancellation is an accident of the Fermi surface: magnons
and phonons populate every band thermally, each with its own Chern
number, so the harmonic sector of the Hodge decomposition is not a
single constant, and a variance does not annihilate it. What survives is
the occupation-weighted dispersion of Chern numbers. That current noise
sees band geometry is known, and at zero temperature the antisymmetric
part of the noise sum rule already returns a Chern number; the new
element is an object with no zero-temperature analogue, existing only
when several bands are populated and weighted differently. We derive it
from a fluctuating Boltzmann equation, place it in the standard
slow-variable framework where noise weights project onto conserved
quantities, and resolve the energy-magnetization subtraction at the
level of fluctuations: the subtracted term is a curl, contributing
exactly zero to the current a transport measurement records, realization
by realization, and no magnetization auto-correlation enters the
antisymmetric cross-spectrum at any wavevector.
Measurement uses an equilibrium sum
rule on that cross-spectrum, with no drive and hence no suppression: its
frequency-integrated form cannot in practice separate topological from
geometric content, for a rank reason, whereas the frequency-resolved
form does, at accessible resolution. The protocol is specified for
Cu(1,3-bdc), with calibration and robustness
characterized. A driven alternative lies twelve orders below the
equilibrium floor. Two independent implementations agree on every
convention-free quantity.
\end{abstract}

\maketitle

\tableofcontents

\section{Introduction}
\label{sec:intro}

The Berry curvature of a Bloch band performs several jobs. Integrated
over a filled band, it yields a Chern number and an anomalous Hall
conductivity that is quantized and insensitive to deformation. Weighted
by a Fermi surface, it yields a Berry-curvature dipole and a nonlinear
Hall response that is not quantized at all and is exquisitely sensitive
to band structure. Weighted by a Bose factor, it yields a thermal Hall
conductivity that is neither. These are usually treated as separate
calculations sharing a common ingredient. They are better understood as
separate \emph{projections} of one object, the projection being the
Hodge decomposition of the curvature two-form on the Brillouin zone.

This paper asks what that decomposition says about the second moment of
the current rather than the first. For electrons the answer is clean,
and usefully negative: the topological sector of the curvature is silent
in the noise. The proof is short and rests on two hypotheses. One is a
conservation law. The other, invisible while it holds, is that the
topological sector is a single number. At a Fermi surface it is, because
essentially one band matters. For bosons it is not, and this paper
follows out the consequences. Section~\ref{sec:overview} states the
observable that results, the protocol that measures it, and the material
prediction, ahead of the derivation.

We have written the paper so that a reader acquainted with the Berry
phase but not with differential forms, or familiar with fluctuating
hydrodynamics but not band topology, can follow it without consulting
anything else. Sections~\ref{sec:hodge} and~\ref{sec:visibility} are
accordingly elementary and may be skipped by anyone who already knows
that a closed two-form on a torus splits into a constant piece carrying
the Chern number and a remainder that integrates to zero.

\subsection{What we claim, and what we do not}
\label{sec:claims}

Since several of the ingredients are standard in one field or another,
we state up front which parts we regard as ours. The paper rests on
three layers of prior work.

The \emph{kinetic machinery} is not ours. The fluctuating Boltzmann equation is due to
Kogan and Shul'man~\cite{KoganShulman}, and the constrained equal-time correlator used in
Sec.~\ref{sec:langevin}, including the rank-one subtraction that enforces particle-number
conservation, appears explicitly in Gantsevich, Gurevich and Katilius~\cite{GGK}. Our
derivation in that section applies their framework; it does not establish it.

The \emph{spectral-weight decomposition} is likewise not ours. That the slow spectral
weights of an observable are its projections onto the conserved quantities is the standard
structure of fluctuating hydrodynamics: Mori's projection
formalism~\cite{Mori1965}, Kadanoff and Martin's derivation of correlation functions from
conservation laws~\cite{KadanoffMartin}, and, in its most familiar concrete form, the
Landau--Placzek ratio of the Rayleigh line to the Brillouin lines~\cite{Mountain}.
Section~\ref{sec:collision} is the Landau--Placzek decomposition applied to the Berry
curvature, and we present it as such.

The \emph{proposition that band geometry is visible in current noise} is also not ours,
and it is the closest prior work. Neupert, Chamon and Mudry showed, for a zero-temperature
band insulator, that the current-noise spectrum carries the quantum geometric tensor, that
its frequency integral obeys a sum rule whose antisymmetric part is the Hall conductivity
and hence returns the Chern number, and that for two bands the frequency-resolved noise
gives the geometric tensor integrated over the contour of constant direct
gap~\cite{NeupertChamonMudry}. Both structural moves used below therefore have a
zero-temperature electronic ancestor, and the sum rule itself belongs to a documented
family of frequency moments of the conductivity in which the zeroth Hall moment is
proportional to the Chern number. Those authors also state the obstruction: for more than
two bands the sum over bands cannot be performed, because the energies depend on the band
index too, so that energetics and quantum geometry combine. That obstruction is the
subject of Secs.~\ref{sec:parti} and~\ref{sec:partii}, and solving it at finite temperature
is what this paper adds to theirs. Two later results bound the problem on either side. In
time-reversal-invariant fermionic systems the driven noise carries an
occupation-variance-weighted square of the anomalous velocity, the fermionic analogue of
the driven object of Appendix~\ref{app:driven}~\cite{WeiEtAl}; there the weight is
unconstrained rather than conservation-constrained, and every Chern number vanishes by
symmetry, so the question asked here cannot be posed there. In the opposite limit of
dispersionless bands the frequency integral does return an integer~\cite{KruchkovRyu2024},
for reasons taken up in Sec.~\ref{sec:parti}.

One point of contact must be stated plainly, because the two statements sound contradictory
without being so. The electronic silence theorem below concerns the transport noise of
\emph{occupation fluctuations}, computed from a kinetic equation carrying a conservation
constraint, at finite temperature. The zero-temperature result just quoted concerns the
ground-state current-current correlator, whose content is interband particle-hole
coherence and which has no occupations to fluctuate. Both go by the name current noise;
they are different objects, and only the first is silent.

The electronic statement, that the topological sector is silent in the transport noise,
is the starting point of this paper rather than one of its results. It is established in
the framework paper~\cite{HodgeFramework}, and it is also recovered here as a corollary
rather than taken on trust: the kinetic derivation of Sec.~\ref{sec:langevin} applies
verbatim with the fermionic weight $\mathcal D=f(1-f)$, and at a Fermi surface the normalized weight
is supported on essentially one band, so the topological term of Eq.~\eqref{eq:varsplit}
vanishes identically, which is hypothesis~(H2) doing its work.

Our claims are the following. First, that the electronic cancellation has two hypotheses
rather than one, and that the second fails for bosons (Sec.~\ref{sec:failure}). Second,
that the surviving quantity is the occupation-weighted dispersion of Chern numbers, which
we have not found treated as an observable and which cannot arise at zero temperature,
where the occupations are zero or one and there is no dispersion across bands to weight
(Sec.~\ref{sec:failure}). Third, the resolution of the energy-magnetization subtraction at
the level of fluctuations: a vanishing theorem for lead-measured noise, and a protection
lemma showing that the magnetization auto-correlation cannot enter the antisymmetric
cross-spectrum at any wavevector (Sec.~\ref{sec:magnetization}).
Fourth, a conditioning result: for dispersive bands
the frequency-integrated equilibrium observable is identifiable in principle but not in
practice, the direction that moves one Chern unit between the upper bands lying some three
orders below the model-error floor, for a reason of rank (Sec.~\ref{sec:parti}).
Fifth, a protocol that
can, with its resolution requirement, experimental design and error budget
(Secs.~\ref{sec:partii}--\ref{sec:material}). The fourth and fifth together are the
finite-temperature multiband answer to the obstruction stated in
Ref.~\cite{NeupertChamonMudry}, which is the sense in which this paper continues that one.

\section{Observable, protocol and prediction}
\label{sec:overview}

The object this paper delivers is most usefully stated in advance. Write $C_\lambda$
for the Chern number of band $\lambda$ and $\bar n$ for the Bose occupation. The kinetic
theory of Secs.~\ref{sec:langevin} and~\ref{sec:failure} attaches to the transverse
transport noise a normalized thermal weight $p$, built from the occupation fluctuation
$\bar n(1+\bar n)$, and shows that the noise contains the \emph{variance} under $p$ of
the per-band harmonic constants of the Berry curvature, which are the Chern numbers up to
the fixed factor $2\pi/A_\BZ$, with $A_\BZ$ the Brillouin-zone area. At a Fermi surface
that variance is identically zero for electrons, because essentially one band is weighted
and a single band has a single Chern number; for bosons it is nonzero whenever two
populated bands carry different Chern numbers (Proposition~\ref{prop:main}). The
observable is thus the occupation-weighted dispersion of Chern numbers across populated
bands, an object without a zero-temperature analogue. For the kagome ferromagnet at the
material coupling $D/J=0.15$, with $D$ the Dzyaloshinskii--Moriya coupling and $J$ the
exchange, it is $0.38$, $1.84$ and $5.44$ per cent of the noise variance at $T=0.5$, $1$
and $3\,JS$, with $JS$ the magnon energy unit (Table~\ref{tab:shares}).

An equilibrium sum rule fixes where it sits in data (Sec.~\ref{sec:sumrule}). With
$S^{A}=\frac{1}{2i}(S_{xy}-S_{yx})$ the antisymmetric part of the symmetrized current
cross-spectrum between orthogonal components, the $\tanh$-weighted frequency integral of
$S^{A}$ returns the occupation-weighted first moment of the curvature, whose harmonic
content is a known linear form in the Chern numbers. No drive is applied, so none of the
gradient suppression that rules out the driven route (Appendix~\ref{app:driven}) is
incurred. The integral alone, however, cannot separate the topological from the geometric
content: for dispersive bands the temperature-and-field sweep of the integrated
observable is a rank-deficient transform (Sec.~\ref{sec:parti}). The protocol therefore
works with the integrand before integration, resolved in frequency and binned in the
band-pair gap coordinate, where the pair decomposition of Sec.~\ref{sec:partii} turns the
recovery of the Chern numbers into a discrete selection against a nuisance basis rich
enough to be honest.

The prediction is concrete. For Cu(1,3-bdc), the kagome ferromagnet whose magnon bands
and thermal Hall effect are established, neutron parameters give $J=0.6$~meV, $S=1/2$ and
$D/J\approx0.15$, so $JS=0.3$~meV, and a Zeeman gap of $0.3\,JS$ requires about $1$~T
(Sec.~\ref{sec:material}). The spectrum turns on at the minimal direct gap
$2\sqrt3\,D=0.52\,JS$, which is $38$~GHz, and the band-pair gap support extends to
$473$~GHz. The protocol asks for $S^{A}$ across that window in one hundred and forty-four
bins of about $3$~GHz, at nine settings, three temperatures by three fields, and recovers
the Chern triple $(1,0,-1)$, in the anchored convention of Appendix~\ref{app:sign}, by
discrete selection. At that resolution the model error is $9.0\e{-3}$ and recovery of the
triple at five per cent additive noise is one thousand of one thousand trials
(Fig.~\ref{fig:gap}); the selection survives Lorentzian broadening of the data out to a
width of $0.35\,JS$, two thirds of the turn-on quoted above
(Sec.~\ref{sec:discussion}), and the
calibration burden is met by co-estimating the band structure from the same spectrum
(Sec.~\ref{sec:robust}). The window carries a built-in null test: the intrinsic theory
predicts $S^{A}$ identically zero below the turn-on, so any antisymmetric weight measured
on $[0,2\sqrt3\,D)$ reads the systematic error directly.

The derivation fills the body of the paper. Sections~\ref{sec:hodge}
and~\ref{sec:visibility} set up the decomposition at tutorial level;
Secs.~\ref{sec:electronic}--\ref{sec:failure} establish the observable;
Secs.~\ref{sec:collision} and~\ref{sec:magnetization} give its spectral anatomy and settle
the energy-magnetization subtraction at the level of fluctuations; and
Secs.~\ref{sec:sumrule}--\ref{sec:material} build and stress-test the protocol. The driven
route is quantified, and ruled out by twelve orders, in Appendix~\ref{app:driven}.

\section{The Brillouin zone as a manifold without boundary}
\label{sec:hodge}

One structural fact underlies everything below: the Brillouin zone has no boundary. In
two dimensions it is a torus $T^2$, in three a torus $T^3$. Any total derivative
integrated over it vanishes, and that single statement is used repeatedly.

The Berry connection $A_\mu(\mathbf k)=i\langle u_{\mathbf k}|\partial_\mu u_{\mathbf k}\rangle$
is a one-form on that torus, and the Berry curvature
$\Om = \partial_x A_y-\partial_y A_x$ a two-form. In two dimensions a two-form is of top
degree, and this deserves a pause, since it is responsible for a dimensional accident
that costs people time. A $p$-form on a $d$-dimensional manifold decomposes as
\begin{equation}
\label{eq:hodge}
\Om \;=\; \bar\Om \;+\; \dd\alpha \;+\; \delta\gamma,
\end{equation}
into harmonic, exact and co-exact pieces, mutually orthogonal in $L^2$. In two dimensions
the co-exact piece of a two-form is empty, because there is no three-form for $\gamma$ to
be. Only in three dimensions, where the curvature is a two-form on a three-torus, do all
three sectors appear, and the co-exact sector is then the one carrying monopole sources,
that is, Weyl points.

We do not prove the decomposition. What matters is what the pieces are on a torus. The
harmonic piece is the constant, and its integral is quantized:
\begin{equation}
\label{eq:harmonic}
\bar\Om \;=\; \frac{2\pi C}{A_\BZ},\qquad
C \;=\; \frac{1}{2\pi}\int_\BZ \Om\,\dd^2k \;\in\;\mathbb{Z},
\end{equation}
with $A_\BZ$ the zone area. By Stokes' theorem the remaining piece $\dd\alpha$ integrates
to zero over the zone. The split into ``constant plus zero-mean remainder'' is thus the
split into ``topological plus geometric,'' and it is unique.

For a multiband system the decomposition is done band by band, and this is where the
paper begins. Each band $\lambda$ carries its own harmonic constant
\begin{equation}
\label{eq:bandharmonic}
\bar\Om_\lambda \;=\; \frac{2\pi C_\lambda}{A_\BZ},
\end{equation}
and the constants generally differ. At a Fermi surface only one band is in play for
electrons, and the difference is invisible. For bosons every band is populated, and it is
not.

\section{Which sectors the probes see}
\label{sec:visibility}

Almost every response of interest is a curvature integrated against a weight fixed by the
probe,
\begin{equation}
\label{eq:response}
\mathcal R \;=\; \int_\BZ w(\mathbf k)\,\Om(\mathbf k)\,\dd^2k .
\end{equation}
For the anomalous Hall conductivity $w$ is the occupation; for the nonlinear Hall response
it involves $\partial_\varepsilon f_0$; for the thermal Hall conductivity it is the Bose
weight $c_2$ defined below. On a manifold without boundary, inserting the decomposition
and integrating by parts gives
\begin{equation}
\label{eq:lemma}
\mathcal R \;=\; \bar\Om\int_\BZ w \;-\; \int_\BZ \dd w\wedge\alpha .
\end{equation}

\begin{lemma}[sector visibility]
\label{lem:visibility}
The harmonic sector couples only to the integrated weight $\int w$; the exact sector
couples only to the variation $\dd w$ of the weight.
\end{lemma}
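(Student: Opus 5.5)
The plan is to derive Eq.~\eqref{eq:lemma} explicitly from the decomposition \eqref{eq:hodge} and then read the lemma off its two terms. In two dimensions the co-exact piece is absent, so $\Om=\bar\Om+\dd\alpha$ with $\bar\Om=2\pi C/A_\BZ$ constant by Eq.~\eqref{eq:harmonic}. Inserting this into Eq.~\eqref{eq:response} splits $\mathcal R$ into a harmonic part $\int_\BZ w\,\bar\Om\,\dd^2k$ and an exact part $\int_\BZ w\,\dd\alpha$. Because $\bar\Om$ is a constant, the harmonic part factors immediately as $\bar\Om\int_\BZ w$. This settles the first clause: the weight enters the harmonic term only through its integral, so any two weights with the same $\int w$ see the harmonic sector identically.

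For the exact part I will use the Leibniz rule for forms, $\dd(w\alpha)=\dd w\wedge\alpha+w\,\dd\alpha$, where $w$ is a $0$-form and $\alpha$ a $1$-form. This gives $\int_\BZ w\,\dd\alpha=\int_\BZ \dd(w\alpha)-\int_\BZ \dd w\wedge\alpha$. The first integral vanishes by Stokes' theorem, because $\BZ=T^2$ has no boundary, which is the structural fact stated in Sec.~\ref{sec:hodge}. What remains is $-\int_\BZ \dd w\wedge\alpha$, which reproduces Eq.~\eqref{eq:lemma}. In components this term is $-\int(\partial_x w\,\alpha_y-\partial_y w\,\alpha_x)\,\dd^2k$. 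It depends on $w$ only through $\dd w$, and it vanishes whenever $w$ is constant, because then $w\,\dd\alpha$ integrates to zero. This settles the second clause.

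The step that needs care is justifying Stokes' theorem, which requires $w\alpha$ to be a globally defined smooth $1$-form on the torus. The Berry connection $A$ itself is not globally defined when $C\neq0$, but the lemma does not use $A$. It uses the primitive $\alpha$ of the zero-mean remainder $\Om-\bar\Om$, and that primitive does exist globally. The reason is that the de~Rham class of $\Om-\bar\Om$ is fixed by its integral, which is zero, so $\Om-\bar\Om$ is exact on $T^2$. I will state this explicitly and also require $w$ to be smooth and periodic. The probe weights listed under Eq.~\eqref{eq:response}, namely occupations, Bose weights and $\partial_\varepsilon f_0$ at finite temperature, are smooth functions of band energies, so they qualify. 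Band degeneracies would spoil the smoothness of $\Om_\lambda$; I will note that the per-band statement assumes the band is isolated.

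Finally, I will add that $\alpha$ is determined only up to a closed $1$-form. The exact-sector term is nevertheless well defined, because shifting $\alpha$ by a closed $1$-form $\beta$ changes $\int_\BZ \dd w\wedge\beta$ by $\int_\BZ \dd(w\beta)=0$. The exact-sector coupling therefore depends only on $\dd w$ and on the exact part $\Om-\bar\Om$, not on the choice of primitive.
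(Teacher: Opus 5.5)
Your proposal is correct and follows the paper's own route: insert $\Om=\bar\Om+\dd\alpha$, pull out the constant harmonic piece as $\bar\Om\int w$, and integrate $w\,\dd\alpha$ by parts on the boundaryless torus to obtain Eq.~\eqref{eq:lemma}. Your extra remarks add rigor the paper leaves implicit without changing the argument: that $\Om-\bar\Om$ has a global primitive because its integral vanishes, and that the exact-sector term does not depend on which primitive is chosen.
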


Three corollaries follow at once, and it matters to be clear about which are new.

(i) If $\int w=0$ the harmonic sector drops out. This is the electronic noise theorem
discussed below, in which particle-number conservation pins the weight to zero mean.

(ii) If $\dd w=0$ the exact sector drops out, leaving only the Chern numbers. That is the
leading high-temperature behavior of the magnon thermal Hall effect, where
$c_2\to\pi^2/3$ uniformly; but it vanishes whenever $\sum_\lambda C_\lambda=0$, and the
limit is then set by the first correction, in which $\dd w\neq0$. Mook, Henk and Mertig
reach that correction by l'H\^opital's rule and obtain the first moment
$\sum_\lambda\int\varepsilon_\lambda\Om_\lambda$ as a figure of merit for the effect,
without the decomposition~\cite{MookHenkMertig2014}.

(iii) Where the curvature vanishes on the support of the weight, the exact sector must cancel
the harmonic one, since $\Om=\bar\Om+\dd\alpha$ with $\bar\Om$ constant. This explains the
low-temperature blindness of the thermal Hall effect, but it restates $\Om(\Gamma)=0$
rather than adding to it.

The maneuver common to all three, integration by parts to separate a quantized from a
non-quantized piece, is Haldane's~\cite{Haldane2004}. Lemma~\ref{lem:visibility} unifies
rather than discovers, and we use it as a design rule: \emph{the geometric sector is
visible exactly to the extent that the probe's weight varies over the zone}. That rule is
what renders Sec.~\ref{sec:parti} predictable in retrospect.

\section{The electronic theorem and its pair of hypotheses}
\label{sec:electronic}

For electrons, at second order in an applied field and under a conservation law, the
transverse current fluctuation has the structure
\begin{equation}
\label{eq:elecvar}
\Var(J_\perp)\;\propto\;\Var_p(\Om),
\end{equation}
a variance rather than a second moment, with $p$ a normalized thermal weight. The
fermionic counterpart of this object, an occupation-variance-weighted square of the
anomalous velocity, has been derived for time-reversal-invariant systems and named the
quantum fluctuation of the Berry curvature~\cite{WeiEtAl}; its weight carries no
conservation constraint, which is precisely hypothesis~(H1) below. The variance is what
carries the content: a variance annihilates a constant, the harmonic sector \emph{is} a
constant, and so the Chern number cancels out of the noise: the topological response is
silent in the fluctuations while the geometric sector is the noise source.

The proof requires two ingredients, of which only one is usually stated.

\begin{enumerate}
\item[(H1)] A conservation law, ensuring that the fluctuation weight has zero mean over
the zone, so that Lemma~\ref{lem:visibility} removes the harmonic coupling.
\item[(H2)] That the harmonic sector is a \emph{single} constant over everything the
weight touches.
\end{enumerate}

(H1) is the hypothesis everyone writes down. In the electronic setting (H2) is
invisible, because the Fermi surface selects essentially one band, and a single band has
a single $\bar\Om$. It is nonetheless a hypothesis, and the remainder of this paper is
what happens when it fails.

\section{Bosons: what is different}
\label{sec:bosons}

Phonons and magnons carry Berry curvature on their Brillouin zones, and their thermal
Hall conductance is a temperature-weighted integral of it over \emph{all} bands,
\begin{equation}
\label{eq:kxy}
\kxy \;=\; -\frac{\kB^{2}T}{\hbar V}\sum_\lambda\int_\BZ
c_2\!\big(n_B(\varepsilon_{\lambda\mathbf k})\big)\,\Om_\lambda(\mathbf k)\,\dd^2k ,
\end{equation}
with $n_B$ the Bose function and
\begin{equation}
\label{eq:c2}
c_2(n)\;=\;(1+n)\Big[\ln\frac{1+n}{n}\Big]^{2}-\ln^{2}n-2\,\mathrm{Li}_2(-n).
\end{equation}
Two properties of $c_2$ will matter. At high temperature it saturates, $c_2\to\pi^2/3$,
which is why the $\pi^{2}/3$ term in the thermal Hall response dies through the Chern
sum rule $\sum_\lambda C_\lambda=0$, itself a theorem here rather than an assumption
(Sec.~\ref{sec:partii}), leaving the $1/T$ correction to fix a finite high-temperature
limit (Appendix~\ref{app:sign}); and it obeys
a reflection identity under $n\to-(1+n)$, equivalent to $\varepsilon\to-\varepsilon$,
namely
\begin{equation}
\label{eq:c2reflect}
c_2(n)+c_2\big(-(1+n)\big)\;=\;\tfrac{2\pi^{2}}{3},
\end{equation}
which we have verified to thirty digits. Equation~\eqref{eq:c2reflect} states that
$c_2-\pi^2/3$ is odd in $\varepsilon/T$, so \emph{every} even moment is missing from the
high-temperature asymptotic series, to all orders. In particular the expansion runs
\begin{equation}
\label{eq:c2exp}
c_2 \;=\; \frac{\pi^{2}}{3}-\frac{\varepsilon}{T}
+\frac{1}{36}\Big(\frac{\varepsilon}{T}\Big)^{3}+\cdots ,
\end{equation}
with no quadratic term, so the coefficient after the leading one is the third moment of
the curvature-weighted density of states, not the second.

Three things change relative to the electronic case: there is no Fermi surface, so the
exact sector cannot mean ``Fermi-surface geometry''; bosons do not in general conserve
number, so (H1) must be re-established rather than assumed; and every band is populated,
so (H2) stands in jeopardy.

We work with collinear ferromagnetic magnons, for which the first difficulty does not
arise: $S_z$ is conserved at harmonic order, magnon number is conserved, the Hamiltonian
is Hermitian in the ordinary sense, and no paraunitary machinery is needed. Bogoliubov
systems (antiferromagnets and anything with anomalous pairing) do not conserve number
and require the symplectic quantum geometric tensor~\cite{TesfayeEckardt}; they fail (H1)
outright and we do not treat them here. Zyuzin and Kovalev's condition for a magnon
current to be well defined~\cite{ZyuzinKovalev},
$\hat O\sigma_3\hat H-\hat H\sigma_3\hat O=0$, is exactly (H1), identified independently
in the magnon setting, and the collinear ferromagnet is its trivial case.

\section{The fluctuating Boltzmann equation}
\label{sec:langevin}

We now derive the noise. The derivation applies standard kinetic fluctuation theory, and
we flag its provenance as we go.

Let $n_a$ denote the occupation of mode $a=(\lambda,\mathbf k)$ and $\delta n_a$ its
fluctuation about the equilibrium value $\bar n_a$. Write
\begin{equation}
\label{eq:Dweight}
\mathcal D_a \;=\; \bar n_a(1+\bar n_a)
\end{equation}
for the bosonic equal-time fluctuation weight. Take a number-conserving relaxation operator
\begin{equation}
\label{eq:relax}
\mathbb{L}\;=\;\tau^{-1}\big(\mathbbold{1}-\Pi\big),\qquad
\Pi_{ab}\;=\;\frac{\mathcal D_a}{\sum_c \mathcal D_c},
\end{equation}
which relaxes $\delta n$ toward a chemical-potential shift rather than toward zero. Then,
with $u$ the uniform vector, $u^{\mathsf T}\mathbb{L}=0$, which is the statement of number
conservation. The canonical correlator
\begin{equation}
\label{eq:correlator}
\mathbb{C}\;=\;\operatorname{diag}(\mathcal D)-\frac{\mathcal D \mathcal D^{\mathsf T}}{\sum_c \mathcal D_c}
\end{equation}
satisfies $\mathbb{L}\mathbb{C}=\mathbb{C}/\tau$; the Lyapunov condition
$\mathbb{L}\mathbb{C}+\mathbb{C}\mathbb{L}^{\mathsf T}=2\mathbb{J}$ holds exactly with
Langevin strength $\mathbb{J}=\mathbb{C}/\tau$; and $\mathbb{C}u=0$. All four have been
checked to machine precision. Integrating the decay of the transverse current correlator
gives
\begin{equation}
\label{eq:S0}
S(0)\;=\;2\tau\Big(\sum_c \mathcal D_c\Big)\,\Var_p(\Om),\qquad p=\mathcal D\Big/\!\sum_c \mathcal D_c .
\end{equation}

Equation~\eqref{eq:correlator} is not ours. The one-time correlation function of
occupation numbers given by Gantsevich, Gurevich and Katilius~\cite{GGK} has exactly this
structure: a rank-one subtraction $-N(\partial_N\bar F_p)(\partial_N\bar F_{p_1})$
carrying the property $\sum_{p_1}\varphi_{pp_1}=-\bar F_p$, which they state corresponds
to the particle-number conservation condition
$\sum_{p_1}(\delta F_p\,\delta F_{p_1})_{q\to0}=0$. That is the same rank-one
subtraction, the same annihilation of the uniform mode, and the same origin in number
conservation as~\eqref{eq:correlator} with $\mathbb{C}u=0$. The Boltzmann--Langevin
equation itself they attribute to Kogan and Shul'man~\cite{KoganShulman}. This section
therefore applies their framework rather than deriving it, and the verification reported
above confirms an implementation rather than establishing a result.

What matters for us is that the variance structure of~\eqref{eq:S0} is the genuine steady
state and not an ansatz, and that it \emph{emerges} from the conservation law rather than
being imposed.

\section{Where (H2) fails, and what replaces it}
\label{sec:failure}

Insert the band-resolved decomposition $\Om_a=\bar\Om_\lambda+\delta\Om_a$ into
Eq.~\eqref{eq:S0}. Since the harmonic constants are band-dependent,
Eq.~\eqref{eq:bandharmonic}, the variance splits into three pieces:
\begin{equation}
\label{eq:varsplit}
\Var_p(\Om)\;=\;\underbrace{\Var_p(\bar\Om_\lambda)}_{\text{topological}}
\;+\;\Var_p(\delta\Om)\;+\;2\Cov_p(\bar\Om_\lambda,\delta\Om),
\end{equation}
and the first term is nonzero whenever the Chern numbers differ. At a Fermi surface it
vanishes for electrons because there is only one $\bar\Om$; for bosons it does not.

\begin{proposition}
\label{prop:main}
In a multiband bosonic system the topological sector is not silent in the transport
noise. It enters through the occupation-weighted dispersion of Chern numbers across
populated bands, and it vanishes only if all populated bands share a Chern number.
\end{proposition}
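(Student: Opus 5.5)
The plan is to start from the steady-state noise formula, Eq.~\eqref{eq:S0}, which Sec.~\ref{sec:langevin} established as the genuine steady state of the number-conserving fluctuating Boltzmann equation. There $S(0)=2\tau(\sum_c\mathcal D_c)\Var_p(\Om)$ with $p=\mathcal D/\sum_c\mathcal D_c$. Substituting the band-resolved Hodge split $\Om_a=\bar\Om_\lambda+\delta\Om_a$ of Eq.~\eqref{eq:bandharmonic} gives Eq.~\eqref{eq:varsplit}. The proposition then reduces to two claims. The first is that the term $\Var_p(\bar\Om_\lambda)$ is a genuinely topological contribution to the noise. The second is that it vanishes exactly when all populated bands share a Chern number.

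For the first claim, I would make the topological term explicit. Define band weights $P_\lambda=\sum_{\mathbf k}p_{\lambda\mathbf k}$, which are positive and sum to one. Because $\bar\Om_\lambda$ is constant on band $\lambda$,
\begin{equation*}
\Var_p(\bar\Om_\lambda)=\Big(\frac{2\pi}{A_\BZ}\Big)^{2}\Big[\sum_\lambda P_\lambda C_\lambda^{2}-\Big(\sum_\lambda P_\lambda C_\lambda\Big)^{2}\Big].
\end{equation*}
This is, up to the fixed factor, the variance of the integers $C_\lambda$ under the occupation-weighted distribution $P$. That distribution is the ``occupation-weighted dispersion of Chern numbers'' named in the statement.

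For the second claim, positivity of $\mathcal D_a=\bar n_a(1+\bar n_a)$ at any finite temperature gives $P_\lambda>0$ for every band. A variance with all weights strictly positive vanishes if and only if the variable is constant on the support. Hence the topological term is zero if and only if $C_\lambda$ is independent of $\lambda$ over the populated bands. The contrast with the electronic case is that there the weight is concentrated on one band, so (H2) holds automatically.

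The main obstacle is the word \emph{silent}. Nonvanishing of the first term in Eq.~\eqref{eq:varsplit} does not by itself imply that the total noise depends on the Chern numbers. The cross term $2\Cov_p(\bar\Om_\lambda,\delta\Om)$ could in principle cancel it, since $\delta\Om$ has zero mean per band only under the flat measure, not under $p$. I would handle this by proving that the total noise is not invariant under a change of Chern numbers. Write the covariance as $\sum_\lambda\bar\Om_\lambda\,P_\lambda\big(\langle\delta\Om\rangle_{p|\lambda}-\langle\delta\Om\rangle_p\big)$, which is linear in the $C_\lambda$, while the topological term is quadratic in them. Viewed as a function of the harmonic constants at fixed geometric remainders, $\Var_p(\Om)$ is then a quadratic form. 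Its purely quadratic part is the positive semidefinite form above, which is nonzero off the constant direction. So $\Var_p(\Om)$ cannot be independent of the $C_\lambda$: a linear term cannot cancel a quadratic one identically.

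Stating the result this way, as sensitivity of $S(0)$ to the topological data rather than as a sign statement, avoids any claim that the cross term is small. I would illustrate it with the kagome shares quoted in Table~\ref{tab:shares}.
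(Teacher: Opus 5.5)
Your proposal is correct and follows the paper's own argument. You substitute the band-resolved split into Eq.~\eqref{eq:S0}, identify the first term of Eq.~\eqref{eq:varsplit} as the $P$-weighted variance of the $C_\lambda$, and use $\mathcal D_a=\bar n_a(1+\bar n_a)>0$ to obtain the only-if clause. Your handling of the cross term $2\Cov_p(\bar\Om_\lambda,\delta\Om)$ goes beyond the paper, which labels the first term topological and does not address possible cancellation. You note that the cross term is linear in the harmonic constants, and vanishes along the uniform direction, while $\Var_p(\bar\Om_\lambda)$ is a positive semidefinite quadratic form whose kernel is exactly that direction. This is the right way to make ``not silent'' precise: it is the failure of the invariance under harmonic shifts that defines the electronic silence theorem.
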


The quantity is a variance of quantized numbers weighted by thermal populations. We
searched the literature for it as an observable and did not find it: what is tracked
there are per-band Chern numbers, their reordering across topological transitions,
bosonic Bott indices, and the occupation-weighted \emph{mean} through the thermal Hall
conductivity, but not the variance.

Table~\ref{tab:shares} gives the magnitude for the kagome ferromagnet with
Dzyaloshinskii--Moriya coupling, Chern numbers $(+1,0,-1)$, in the anchored convention of
Appendix~\ref{app:sign}. The topological share grows with temperature because the
dispersion can be seen at all only when the upper bands, which carry the nonzero Chern
numbers, are populated.

\begin{table}[t]
\caption{Topological share of the noise variance for the kagome ferromagnet, computed
from the pointwise Kubo curvature with a Zeeman gap $h=0.3\,JS$. Curvatures averaged
over plaquettes inflate these numbers by roughly a factor of three, because the
averaging smooths the near-degeneracy peaks that dominate $\langle\Om^2\rangle$; the
values below are converged, being identical to four figures at grid sizes $N=48$, $72$
and $108$.}
\label{tab:shares}
\begin{ruledtabular}
\begin{tabular}{lccc}
$D/J$ & $T=0.5\,JS$ & $T=1\,JS$ & $T=3\,JS$ \\
\colrule
$0.20$ & $0.53\%$ & $2.95\%$ & $9.01\%$ \\
$0.15$ & $0.38\%$ & $1.84\%$ & $5.44\%$ \\
\end{tabular}
\end{ruledtabular}
\end{table}

The share scales roughly as $D^2$: at small Dzyaloshinskii--Moriya coupling the curvature
develops peaks of height $\sim D^{-2}$ over a region of area $\sim D^{2}$, which leaves
the Chern numbers fixed but makes $\langle\Om^2\rangle$ grow as $D^{-2}$.

The gapless case is pathological rather than merely divergent. When the fluctuation
weight $\bar n(1+\bar n)$ diverges at $\Gamma$, where the curvature vanishes, the
normalized weight collapses onto that point and both the variance and the second moment
go to zero. A Zeeman gap is thus a requirement rather than a convenience, and it is in
any case the experimental situation: Hirschberger \emph{et al.}~\cite{Hirschberger} infer
$g\approx1.6$ from the exponential suppression by the gap $\Delta=g\mu_BB$, and for
Cu(1,3-bdc) a gap of $0.3\,JS=0.09$~meV needs about $1$~T, well inside the range they
sweep.

\section{The collision operator: a Landau--Placzek decomposition of the Berry curvature}
\label{sec:collision}

Equation~\eqref{eq:S0} assumed a single relaxation time. A physical magnon gas has at
least three rates: fast in-band redistribution conserving number and energy alike at
$1/\tau_C$, energy exchange with the phonon bath at $1/\tau_E$, and number relaxation at
$1/\tau_N$. Solving the corresponding Boltzmann--Langevin equation exactly yields three
Lorentzians whose weights are the spectral decomposition of $\Om$ against the conserved
quantities in the $\mathcal D$-weighted inner product:
\begin{equation}
\label{eq:threemode}
\big\langle\Om^{2}\big\rangle_p
=\underbrace{\langle\Om\rangle_p^{2}}_{\text{number mode}}
+\underbrace{\frac{\Cov_p(\Om,\varepsilon)^{2}}{\Var_p(\varepsilon)}}_{\text{energy mode}}
+\underbrace{\text{residual}}_{\text{fast bath}} .
\end{equation}
This closure has been verified to machine precision, with shares
$(2.33,15.92,81.75)\%$ at $T=0.5\,JS$, $(2.10,5.80,92.09)\%$ at $T=1$ and
$(1.62,0.54,97.84)\%$ at $T=3$, in two independent implementations.
Figure~\ref{fig:shares} shows the three shares across the temperature range for both
geometric vertices, the curvature treated here and the magnetization vertex taken up in
Sec.~\ref{sec:magnetization}.

Equation~\eqref{eq:threemode} is Gram--Schmidt in disguise: project $\Om$ onto the span
of the conserved quantities $\{1,\varepsilon\}$ in the $\mathcal D$-weighted inner product and
read off the squared components. The general statement, that for any collision operator
with conserved set $\{Q_i\}$ the slow spectral weights are the squared projections of the
observable onto $\operatorname{span}\{Q_i\}$, is the standard structure of fluctuating
hydrodynamics, and we present it as such. Mori's projector for a set of relevant
variables is $PX=(A_j,A_i)^{-1}(X,A_i)A_j$~\cite{Mori1965,teVrugtWittkowski}, which is
exactly this projection; Chaikin and Lubensky state as a general principle that there is
one mode, one peak in a response function, for each conserved
variable~\cite{ChaikinLubensky}; Kadanoff and Martin divide the total weight of the
density response into a proportion $1-c_v/c_p$ from diffusion and $c_v/c_p$ from
sound~\cite{KadanoffMartin}, a division Mountain presents as the Landau--Placzek
ratio~\cite{Mountain}; and R\'esibois supplies the kinetic version, the linearized
Boltzmann equation as an eigenvalue problem whose zero-eigenvalue eigenfunctions are the
collision invariants~\cite{Resibois}.

We therefore propose the name: Eq.~\eqref{eq:threemode} is the \emph{Landau--Placzek
decomposition of the Berry curvature}. The novelty lies not in the decomposition but in
the observable being decomposed, and in the identification of the number-mode weight
$\langle\Om\rangle_p^2$ with the occupation-weighted Chern moment.

Two consequences follow. An intermediate spectral feature appears at $1/\tau_E$ whose
weight is the energy-locked component of the curvature; note that
$\Cov_p(\Om,\varepsilon)$ is an energy-curvature correlation of the same kind that
builds the high-temperature coefficient of Eq.~\eqref{eq:c2exp}, but not the same object,
since that coefficient is a bare zone integral while this is $\mathcal D$-weighted. And the rate
ratio enhances the topological step, the zero-frequency weight of the topological
Lorentzian being $\tau_N\langle\Om\rangle_p^2$ against $\tau_C\Var$ for the plateau.
Table~\ref{tab:enhance} gives the resulting topological share of $S(0)$.

\begin{table}[t]
\caption{Topological share of $S(0)=\sum_i\tau_iw_i$ as a function of the rate ratio, at
$\tau_E=\tau_C$. A realistic ratio of order thirty already raises the share from the
naive one or two per cent into the tens; dominance requires ratios in the hundreds.}
\label{tab:enhance}
\begin{ruledtabular}
\begin{tabular}{lccccccc}
$\tau_N/\tau_C$ & $1$ & $10$ & $30$ & $60$ & $100$ & $150$ & $238$\\
\colrule
$T=0.5\,JS$ & $2.3\%$ & $19.2\%$ & $41.7\%$ & $58.8\%$ & $70.4\%$ & $78.1\%$ & $85.0\%$\\
$T=1\,JS$   & $2.1\%$ & $17.7\%$ & $39.2\%$ & $56.3\%$ & $68.2\%$ & $76.3\%$ & $83.6\%$\\
$T=3\,JS$   & $1.6\%$ & $14.1\%$ & $33.0\%$ & $49.7\%$ & $62.2\%$ & $71.2\%$ & $79.7\%$\\
\end{tabular}
\end{ruledtabular}
\end{table}

\begin{figure*}[t]
\includegraphics[width=\textwidth]{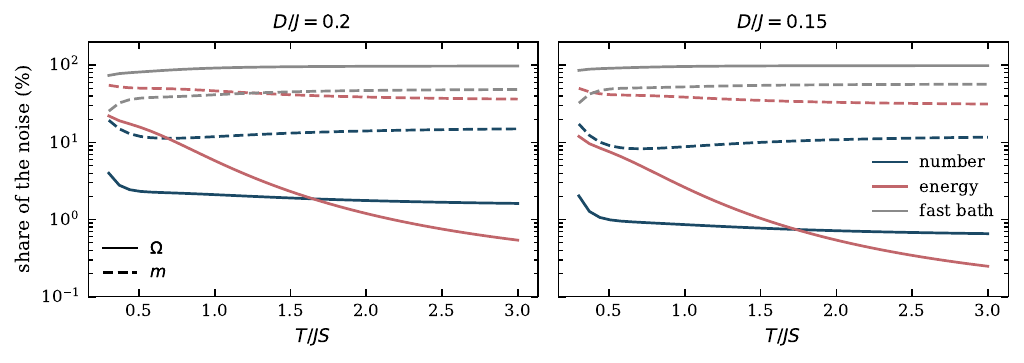}
\caption{Landau--Placzek decomposition of the equilibrium noise of the two geometric
vertices, at the two couplings of interest. Solid curves are the Berry curvature $\Om$,
dashed the self-rotation moment $m$; the three modes are the number-conserving,
energy-conserving and fast-bath channels of Sec.~\ref{sec:collision}. At every
temperature the magnetization noise is far less bath-dominated than the curvature noise,
because $m$ carries one energy denominator against the curvature's two, and the
separation persists across the coupling range.}
\label{fig:shares}
\end{figure*}

\section{The energy-magnetization subtraction at the level of fluctuations}
\label{sec:magnetization}

Before any of this can be called an observable, a trap in thermal transport must be
dealt with. Qin, Niu and Shi~\cite{QinNiuShi} showed that applying the Kubo formula
directly to thermal transport yields an unphysical divergence, because a temperature
gradient drives not only a transport heat current but also a circulating heat current
that no transport experiment sees. Obtaining the observable coefficient requires
subtracting the gravitomagnetic energy magnetization, and their treatment covers bosons
as well as fermions.

At the level of \emph{means} the subtraction is benign in a specific sense: the weight
changes, the form does not. That is exactly the content of Eq.~\eqref{eq:kxy}, whose
weight $c_2$ is what emerges once the subtraction has been performed. The subtraction
therefore acts diagonally on the Hodge sectors at the mean level, and since $c_2\ge0$ has
nonzero per-band zone integral at every temperature, it does \emph{not} remove the
harmonic sector from the observable mean. Harmonic silence in the noise is consequently
a theorem to prove, not a definition.

At the level of \emph{fluctuations} the question has not previously been posed, and its
answer is structural rather than computational: for the noise a transport measurement
records, nothing needs subtracting.

At leading semiclassical order the local number-current density of the magnon gas
consists of a transport vertex plus a curl~\cite{XiaoChangNiu},
\begin{widetext}
\begin{equation}
\label{eq:localcurrent}
\mathbf j(\mathbf r,t)=\sum_a\Big[\mathbf v_a+\tfrac{1}{\hbar}\,\Om_a\,
\hat{\mathbf z}\times\mathbf F\Big]\,n_a(\mathbf r,t)
\;+\;\nabla\times\big[\mathcal M(\mathbf r,t)\,\hat{\mathbf z}\big],
\qquad
\mathcal M=\sum_a m_a\,n_a ,
\end{equation}
\end{widetext}
with $\mathbf v_a$ the group velocity, $\mathbf F$ a mechanical force, the momentum
integral implied in the mode sum, and
\begin{equation}
\label{eq:selfrot}
m_\lambda(\mathbf k)=-\frac{1}{2\hbar}\,
\mathrm{Im}\,\big\langle\partial_{\mathbf k}u_\lambda\big|\times
\big(H(\mathbf k)-\varepsilon_\lambda\big)\big|\partial_{\mathbf k}u_\lambda\big\rangle
\cdot\hat{\mathbf z}
\end{equation}
the self-rotation moment of the magnon
wavepacket~\cite{MatsumotoMurakami1,MatsumotoMurakami2}, the number-current analogue of
the orbital moment. What a transport measurement records is the current collected
between reservoirs, equivalently the current through a complete interior cross-section,
$I(t)=\int\dd y\,j_x(x_0,y,t)$, with all densities vanishing outside the sample.

\begin{theorem}[vanishing of the fluctuation-level subtraction]
\label{thm:vanish}
For any bounded magnetization density that vanishes outside the sample,
\begin{equation}
\int\dd y\,\big[\nabla\times(\mathcal M\hat{\mathbf z})\big]_x(x_0,y,t)
=\int\dd y\,\partial_y\mathcal M=0
\end{equation}
for every complete cross-section and every instant. Magnetization fluctuations therefore
contribute exactly zero to lead-measured current noise, realization by realization, at
every frequency and to all orders in the drive.
\end{theorem}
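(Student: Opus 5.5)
The plan is to compute the $x$-component of the curl directly and then apply the fundamental theorem of calculus along the cross-section, which the boundary condition makes vanish. For a field of the form $\mathcal M(\mathbf r,t)\hat{\mathbf z}$ with $\mathcal M$ independent of any out-of-plane coordinate (the two-dimensional sample of Eq.~\eqref{eq:localcurrent}), the curl is $\nabla\times(\mathcal M\hat{\mathbf z})=(\partial_y\mathcal M,\,-\partial_x\mathcal M,\,0)$. The $x$-component entering $I(t)$ is therefore $\partial_y\mathcal M$, and the first equality of the statement is simply this identity integrated over $y$ at fixed $x_0$ and $t$.

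For the second equality, fix $x_0$ and $t$ and regard $y\mapsto\mathcal M(x_0,y,t)$ as a function on the line. A complete cross-section means the $y$-integral runs over an interval $[y_-,y_+]$ whose endpoints lie outside the sample, or equivalently over all of $\mathbb R$. Then $\int\dd y\,\partial_y\mathcal M=\mathcal M(x_0,y_+,t)-\mathcal M(x_0,y_-,t)=0$, since $\mathcal M$ vanishes outside the sample. Boundedness, together with the assumption that the sample has finite transverse extent, guarantees that the integral is finite and that no contribution escapes to infinity. No averaging over realizations is used: $\mathcal M=\sum_a m_a n_a$ is taken for the instantaneous fluctuating occupations $n_a(\mathbf r,t)$. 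The identity therefore holds pathwise, for every realization and every instant.

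The main obstacle is regularity at the sample edge. Physically, $\mathcal M$ jumps from its bulk value to zero at the boundary, so $\partial_y\mathcal M$ contains a delta-function edge contribution, and this is precisely the edge current that cancels the bulk circulating current. I would handle it by reading $\partial_y\mathcal M$ distributionally, or equivalently by smoothing the edge over a width $\ell$ and letting $\ell\to0$. The fundamental theorem of calculus applies to any bounded function of bounded variation in $y$, and the edge jumps are included in the integral, so the total still telescopes to zero. It is worth stating explicitly that the edge current is part of the lead-measured current; otherwise the theorem would fail.

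Finally, the consequences follow by linearity. Write $I=I_{\rm tr}+I_{\mathcal M}$, where $I_{\mathcal M}(t)\equiv0$ identically. Every correlator $\langle I(t)I(t')\rangle$, and hence every frequency component of the noise, coincides with that of $I_{\rm tr}$ alone. Since nothing was expanded in $\mathbf F$, this holds to all orders in the drive: a drive changes $n_a$ but not the curl structure of the magnetization term.
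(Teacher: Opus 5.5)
Your proposal is correct and is the paper's proof: the fundamental theorem of calculus along the complete section, combined with the vanishing of $\mathcal M$ outside the sample. The paper also restates this in Fourier form: the section current samples only $q_y=0$, where the $x$ component $iq_y\widetilde{\mathcal M}$ of the curl vanishes identically, with no small-$q$ expansion. Your distributional handling of the edge jump, which counts the edge current in the section current, is a sound elaboration of what that Fourier form covers implicitly; it does not even need bounded variation, since for any bounded, compactly supported $\mathcal M$ the distribution $\partial_y\mathcal M$ paired with a function equal to one on a neighbourhood of the support gives zero.
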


\begin{proof}
The fundamental theorem of calculus together with the boundary condition. In Fourier
form the section current involves only $q_y=0$, where the $x$ component of
$i\mathbf q\times\hat{\mathbf z}\,\widetilde{\mathcal M}$, namely
$iq_y\widetilde{\mathcal M}$, vanishes identically; no small-$q$ expansion enters.
\end{proof}

Every driven result above is therefore left unmodified: the constrained
correlator~\eqref{eq:correlator}, the variance~\eqref{eq:S0}, the
decomposition~\eqref{eq:threemode} with its rate-ratio enhancement, and
Proposition~\ref{prop:main} itself. The fluctuation weight remains $\mathcal D=n(1+n)$, whose
per-band zone integral is nonzero at every temperature, so the harmonic non-silence
promised by the mean-level analysis is physical and final. At the mean level the matching
statement, that circulating currents deliver no net flow through a complete
cross-section, is due to Cooper, Halperin and Ruzin~\cite{CooperHalperinRuzin}, and the
underlying ambiguity, that a curl may always be moved between the transport and
magnetization currents without touching the conservation law, is stated in general form
by Kapustin and Spodyneiko~\cite{KapustinSpodyneiko};
Theorem~\ref{thm:vanish} is its realization-by-realization extension, which we have not
found stated for noise.

The theorem coexists with the mean-level subtraction because the two statements concern
different objects. The Qin--Niu--Shi contamination lives in gauge-represented response
functions: a statistical drive cannot be written as a force, so it is represented by an
auxiliary field, Luttinger's gravitational potential~\cite{Luttinger} or Tatara's
thermal vector potential~\cite{Tatara}, whose Kubo response in the transport order of
limits, $\omega\to0$ before $q\to0$, contains magnetization terms absent from the
measured coefficient. Any noise formula obtained by fluctuation-dissipation manipulation
of that gauge-represented response inherits them. The kinetic route of
Sec.~\ref{sec:langevin} computes the correlator of the physical section current directly
and never enters that gauge. For bosons the same separation is known at the mean level,
where the kinetic route isolates the bound magnetization currents transparently and
needs no pseudogravitational field~\cite{MangeolleSavaryBalents,KuwabaraNasu}, and the
same distinction is why the sum rule of Sec.~\ref{sec:sumrule} is untouched, as discussed
there.

For the equilibrium observable a stronger statement holds, one that removes the
detection geometry from the list of assumptions.

\begin{lemma}[chiral protection of the antisymmetric cross-spectrum]
\label{lem:chiral}
In Fourier space the curl current of a single scalar density is rank one,
$\mathbf j^{\mathcal M}(\mathbf q)=\mathbf w(\mathbf q)\,\widetilde{\mathcal M}(\mathbf q)$
with $\mathbf w=i(q_y,-q_x)$, so its cross-spectral matrix is
$w_iw_j^{*}\,S_{\mathcal M\mathcal M}$ and
$S_{xy}-S_{yx}=2i\,\mathrm{Im}(w_xw_y^{*})\,S_{\mathcal M\mathcal M}$ with
$w_xw_y^{*}=-q_xq_y$ real. At every wavevector and frequency, under any drive and any
statistics, the magnetization auto-correlation contributes exactly zero to $S^{A}$: a
single scalar cannot supply a chiral correlation.
\end{lemma}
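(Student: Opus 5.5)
The plan is to make the rank-one structure explicit and then show that the antisymmetric combination of a rank-one Hermitian form with a real phase factor vanishes. I start from the curl term of Eq.~\eqref{eq:localcurrent}, $\mathbf j^{\mathcal M}=\nabla\times(\mathcal M\hat{\mathbf z})$, whose components are $j^{\mathcal M}_x=\partial_y\mathcal M$ and $j^{\mathcal M}_y=-\partial_x\mathcal M$. I Fourier transform in space, with the convention $\partial_\mu\to iq_\mu$, and in time. This gives $\mathbf j^{\mathcal M}(\mathbf q,\omega)=\mathbf w(\mathbf q)\,\widetilde{\mathcal M}(\mathbf q,\omega)$ with $\mathbf w=i(q_y,-q_x)$. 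The essential point is that $\mathbf w$ depends only on $\mathbf q$. It does not depend on $\omega$, on the mode content of $\mathcal M=\sum_a m_an_a$, on the drive, or on the statistics. All dynamical information sits in the single scalar $\widetilde{\mathcal M}$, and this is the only structural input I need.

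Next I form the cross-spectral matrix by the same definition the paper uses for $S_{ij}$: the symmetrized correlator $\langle j_i(\mathbf q,\omega)j_j(\mathbf q,\omega)^*\rangle$, suitably normalized. The deterministic factors pull out of the expectation, so $S^{\mathcal M}_{ij}=w_iw_j^*\,S_{\mathcal M\mathcal M}(\mathbf q,\omega)$, where $S_{\mathcal M\mathcal M}$ is the real, nonnegative auto-spectrum of $\mathcal M$. Computing directly, $w_xw_y^*=(iq_y)(-iq_x)^*=(iq_y)(iq_x)=-q_xq_y$, which is real for real $\mathbf q$. Hence $S_{xy}-S_{yx}=(w_xw_y^*-w_yw_x^*)S_{\mathcal M\mathcal M}=2i\,\mathrm{Im}(w_xw_y^*)S_{\mathcal M\mathcal M}=0$, and therefore $S^{A}$ receives nothing from this term.

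The step needing most care is the symmetrization convention and the meaning of ``auto-correlation''. If the symmetrized spectrum is defined with $\mathbf q\to-\mathbf q$ partners or as $\frac12\langle\{j_i,j_j^\dagger\}\rangle$ in the quantum case, I must check that the prefactor is still $w_i(\mathbf q)w_j(\mathbf q)^*$ and not something like $w_i(\mathbf q)w_j(-\mathbf q)$. I will verify that $\mathbf w(-\mathbf q)=-\mathbf w(\mathbf q)=\mathbf w(\mathbf q)^*$, so that both readings give the same real product $-q_xq_y$. I will also verify that $S_{\mathcal M\mathcal M}$ itself may be taken real, since the antisymmetric part is insensitive to any phase common to both orderings.

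Finally, I will state what the lemma does not cover. Cross terms between the transport vertex and $\mathcal M$ are not auto-correlations and are not claimed to vanish here; they lie outside the statement. The conclusion is purely algebraic: a single scalar field multiplied by a real-symmetric geometric form cannot produce a chiral correlation.
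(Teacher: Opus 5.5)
Your proposal is correct and is essentially the paper's own argument. The curl current factorizes as $\mathbf w(\mathbf q)\widetilde{\mathcal M}$, with $\mathbf w$ purely imaginary and independent of drive and statistics, so the cross-spectral matrix is $w_iw_j^{*}S_{\mathcal M\mathcal M}$ with $w_xw_y^{*}=-q_xq_y$ real and the antisymmetric part vanishes; your extra checks (the symmetrization convention via $\mathbf w(-\mathbf q)=\mathbf w(\mathbf q)^{*}$, and the common scalar factor) are sound, and like the paper you rightly leave the transport--magnetization cross term and the kernel-weighted detection case, which needs the evenness hypothesis, outside the lemma proper.
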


The statement survives arbitrary detection: for
$J_i=\int K_i(\mathbf r)\,j_i(\mathbf r)\,\dd^2r$ with distinct pickup kernels and a
stationary homogeneous magnetization correlator that is even in $\mathbf q$ at each
frequency, the magnetization part of $S_{xy}-S_{yx}$ cancels, the kernel factor being
antisymmetric under exchange while the mixed second derivative of an even correlator is
even. That evenness is a hypothesis, not a consequence of stationarity: a drifting
medium, whose correlator depends on $\mathbf u-\mathbf v\tau$, is even at equal times and
at no other lag, and does contribute. On the inversion-symmetric lattice in equilibrium
evenness holds, which is the case the paper deploys. Whatever the microwave detection
realizes (a complete section, an interfacial pickup or a local probe at standoff), no
magnetization noise enters $S^{A}$. The sole surviving magnetization term there is the
transport-magnetization cross term, which vanishes at $q=0$ by parity on the
inversion-symmetric lattice, $\mathbf v$ odd and $m$ even under $\mathbf k\to-\mathbf k$,
verified on the model to $10^{-19}$, and is expected at $O(q^{2})$ through the streaming
term, a counting we flag for verification rather than assert. The lead measurement of
the driven protocol (Appendix~\ref{app:driven}) is a complete section by construction, so
Theorem~\ref{thm:vanish} covers it exactly.

What the magnetization does carry is finite-$q$ structure, fixed by two facts. In local
equilibrium at a mechanical potential $U(\mathbf r)$, with
$G(x)=\kB T\ln(1-e^{-x/\kB T})$ so that $G'=n_B$, the anomalous term of
Eq.~\eqref{eq:localcurrent} is itself a curl,
\begin{widetext}
\begin{equation}
\label{eq:LEidentity}
\sum_a\tfrac{1}{\hbar}\,\Om_a\,(\hat{\mathbf z}\times\mathbf F)\,
n_B(\varepsilon_a+U)
=\nabla\times\big[\mathcal M^{\rm stat}\hat{\mathbf z}\big],\qquad
\mathcal M^{\rm stat}=\tfrac{1}{\hbar}\sum_a\Om_a\,G(\varepsilon_a+U),
\end{equation}
\end{widetext}
which is the statistical Berry magnetization of the Bose gas~\cite{XiaoShiNiu}. The
statistical part is thus not an independent fluctuating object under mechanical
potentials, and the only independent magnetization vertex is the self-rotation moment of
Eq.~\eqref{eq:selfrot}, whose noise follows from the projection structure of
Sec.~\ref{sec:collision} with $m$ in place of $\Om$. Its character differs instructively:
at $T=0.5\,JS$ and $D/J=0.2$ the (number, energy, residual) shares of
$\langle m^{2}\rangle_p$ are $(12.1,50.6,37.4)\%$ against $(2.3,15.9,81.8)\%$ for the
curvature, spectrally converged, because $m$ carries one energy denominator against the
curvature's two and so correlates far more strongly with $\varepsilon$; a local-probe
admixture rides the slow Lorentzians, and by Lemma~\ref{lem:chiral} none of it enters
$S^{A}$. The contrast is not an artifact of the coupling strength: at the material-like
$D/J=0.15$ the curvature shares fall to $(1.0,7.6,91.4)\%$ while the magnetization shares
shift only to $(9.1,41.7,49.2)\%$, so across the range of interest the two vertices sit
on opposite sides of the bath-dominated regime. The two-band identity
$m_{\rm lower}=\tfrac{1}{2\hbar}(\varepsilon_2-\varepsilon_1)\Om_{\rm lower}$ supplies
the arbiter row for the new vertex (Appendix~\ref{app:sign}).

Scope, stated plainly. Equation~\eqref{eq:localcurrent} is the leading semiclassical
order, which is the order of every result in this paper, and its anomalous vertex treats
the drive-induced interband coherence as adiabatically slaved, valid well below the
minimal direct gap; the driven Lorentzians live at collision rates and satisfy this, and
the sum rule makes no use of the vertex. A statistical fluctuation of $\mu$ or $T$ shifts
the local distribution through the same $G$-rearrangement formally but exerts no force,
so the anomalous vertex does not act on it and, at this order, the current functional
contains no statistical-magnetization fluctuation term; its finite-$q$ form belongs,
together with the fluctuation theory of statistical drives itself, to the heat-channel
continuation of this work on the nonequilibrium framework of Ref.~\cite{GGK}.

\section{The equilibrium sum rule}
\label{sec:sumrule}

A driven measurement pays a suppression penalty, because the anomalous velocity requires
a force; Appendix~\ref{app:driven} quantifies the penalty and rules the driven route out.
It can be avoided altogether.

For the number-conserving ferromagnet with no drive, the antisymmetric part of the
symmetrized current cross-spectrum, $S^{A}=\frac{1}{2i}(S_{xy}-S_{yx})$, obeys
\begin{eqnarray}
-\frac{2}{\pi}\int_{0}^{\infty}\frac{\dd\omega}{\omega^{2}}\,
\tanh\!\Big(\frac{\beta\omega}{2}\Big)\,S^{A}(\omega)
&=&\sum_\lambda\int_\BZ\!\frac{\dd^{2}k}{(2\pi)^{2}}\,n_\lambda\Om_\lambda
\nonumber \\
&\equiv&\langle\Om\rangle_n ,
\label{eq:sumrule}
\end{eqnarray}
whose harmonic content is $(2\pi/A_\BZ)\sum_\lambda C_\lambda N_\lambda$, the
occupation-weighted first moment of the Chern numbers. The derivation employs the
fluctuation-dissipation pairing $S^{A}=\coth(\beta\omega/2)\chi''_{A}$, which survives
antisymmetrization because the Dzyaloshinskii--Moriya term breaks time reversal, together
with the band-pair curvature identity for the velocity matrix elements.
Equation~\eqref{eq:sumrule} has been checked numerically: it closes to $0.1\%$ at $T=0.5$
and $3\,JS$ and to $0.15\%$ at $T=1$.\footnote{The check is analytic before it is
numerical: pair by pair, the factor $\tanh(\beta\omega/2)$ converts the symmetrized
weight $\coth(\beta\omega/2)$ into the occupation difference $n_{\lambda}-n_{\lambda'}$,
so Eq.~\eqref{eq:sumrule} holds as an identity mode by mode before the $\omega$- and
$k$-integrations are performed. The $0.1\%$ closure therefore measures the accuracy of
the quadrature, not of the relation.}

What this relation is and is not deserves statement. Applying the
fluctuation-dissipation step, $\tanh(\beta\omega/2)S^{A}=\chi''_{A}$, the left-hand side
becomes a Kramers--Kronig transform of the absorptive antisymmetric response, and
combined with the standard curvature formula for that response the sum rule states that
equilibrium noise determines the same coefficient a driven magnon Hall measurement would.
That is what the fluctuation-dissipation theorem says in general.
Equation~\eqref{eq:sumrule} is therefore the bosonic, finite-temperature, current-noise
instance of the fermionic optical Hall sum rule, and it stands to it as
Lemma~\ref{lem:visibility} stands to Haldane. We present it as the consistency check that
makes the equilibrium route quantitative, not as a new relation. Its pedigree is worth
stating explicitly: it is the finite-temperature, occupation-weighted, multiband analogue
of the zero-temperature noise sum rule of Ref.~\cite{NeupertChamonMudry}, to which it
reduces when the thermal factor is replaced by its zero-temperature limit and one band is
filled; the novelty here is the weighting, not the relation.
Section~\ref{sec:magnetization} supplies the reason the relation is untouched by the
magnetization subtraction: it is built from the current commutator at $q=0$ and finite
$\omega$, where curls have no component and no drive representation or transport ordering
of limits enters, which is the Kadanoff--Martin separation of the correlator from the
transport limit~\cite{KadanoffMartin}.

The magnitude problem dissolves. There is no drive, hence none of the $\eta^{2}$
suppression of Appendix~\ref{app:driven} anywhere. The chirality ratio
$r=|S^{A}|/\sqrt{S^{S}_{xx}S^{S}_{yy}}$, bounded by unity through Cauchy--Schwarz on the
interband matrix elements, reaches $0.810$, $0.472$ and $0.529$
at the spectral peaks for $T=0.5$, $1$ and $3\,JS$ (at the model coupling): an order-one fraction
of the interband noise at its own
frequency, against the $10^{-10}$ of the driven case. The spectrum turns on at the
minimal direct gap $2\sqrt3\,D$, which for Cu(1,3-bdc) at $D/J=0.15$ is $0.52\,JS$ or
$38$~GHz, with the main weight near $2.0\,JS$, or $147$~GHz.

Two things the framing must not claim. This and the driven result are \emph{not} two
moments of one distribution: the sum rule weights by $n$, the driven variance by
$\mathcal D=n(1+n)$, so there are two thermal weightings. And the sum rule does not isolate the
topology. The harmonic part is $12.2$, $4.8$ and $3.2$ times the total at $T=0.5$, $1$
and $3\,JS$ (at the model coupling $D/J=0.2$), with the geometric part canceling most of it, so the measured quantity is a
small difference of two large sector contributions. This is corollary (iii) of
Lemma~\ref{lem:visibility} once more, and it motivates everything in the next two
sections.

\begin{figure}[t]
\includegraphics[width=\columnwidth]{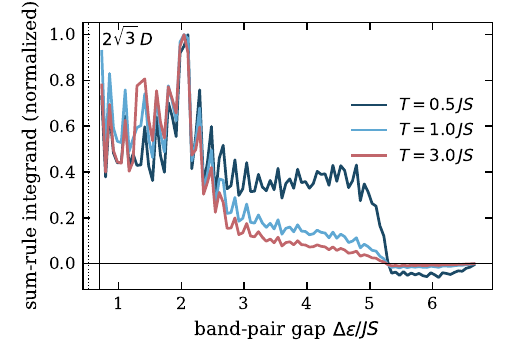}
\caption{The integrand of the sum rule, resolved in the band-pair gap coordinate and
normalized to its largest value, at three temperatures. What frequency resolution
resolves is exactly the coordinate the $\omega$ integration pools over, which is why the
integrated form cannot in practice separate the sectors while the resolved form can. The turn-on at
the minimal direct gap $2\sqrt3\,D$ is marked for $D/J=0.2$ (solid) and $0.15$ (dotted).}
\label{fig:integrand}
\end{figure}

\section{Why frequency resolution is needed: the integrated observable cannot separate
the sectors}
\label{sec:parti}

The natural first attempt at extraction exploits quantization. The harmonic content of
Eq.~\eqref{eq:sumrule} is $(2\pi/A_\BZ)\sum_\lambda C_\lambda N_\lambda$, with the
$C_\lambda$ integers, fixed and field-independent, while the $N_\lambda$ are calculable
populations depending on temperature and Zeeman field. A sweep of $(T,h)$ would then
trace a family of curves in which the harmonic part is a known linear form in a few
integers, and fitting to a lattice rather than a continuum ought to be well conditioned.

It fails, and the reason is instructive. Because the occupation depends on $\mathbf k$
only through $\varepsilon$, the entire sweep is a single integral transform of one
function,
\begin{align}
\label{eq:transform}
\langle\Om\rangle_n(T,h)&=\int\dd\varepsilon\,F(\varepsilon)\,
n_B\!\Big(\frac{\varepsilon+h}{T}\Big),\nonumber\\
F&=\sum_\lambda\Big[\tfrac{2\pi C_\lambda}{A_\BZ}\varrho_\lambda+g_\lambda\Big],
\end{align}
with $\varrho_\lambda$ the band density of states and $g_\lambda$ the zero-mean geometric
density. Both sectors sit inside $F$. The geometric contribution is therefore a linear
form over the \emph{identical} kernel family, and integrality buys nothing, because the
degeneracy directions are themselves integer vectors. Since $\int g_\lambda=0$ exactly,
there is a clean identity, $C_\lambda=2\pi\int_\lambda F$, so the triple is identifiable
in principle; but the protocol is then a deconvolution of $F$, and over the accessible
window the upper bands are Boltzmann, so $h$ enters only through a global factor
$e^{-h/T}$ and the two-dimensional sweep collapses to a one-dimensional Laplace
inversion.

The conditioning is quantitative. Against a complete zero-mean nuisance class, the
per-point identifiability signals relative to the observable RMS are $9.26\e{-5}$ for
$\delta C=(1,0,0)$, $7.60\e{-5}$ for $(0,1,0)$ and $(0,0,1)$, $2.75\e{-5}$ for
$(1,-1,0)$, and $6.53\e{-9}$ for $(0,1,-1)$, against a model error of $1.13\e{-5}$. The
direction moving one Chern unit between the two thermally starved upper bands sits some
seventeen hundred times \emph{below} the floor. In Monte Carlo, a rigid nuisance model
returns a stable, confident and wrong triple on every trial, and a flexible one selects
the sign-flipped triple as often as the truth, so even the overall chirality goes
undetermined.

The hypothesis deserves statement, because it is exactly what a companion literature does
not satisfy. The obstruction is the pooling of $\mathbf k$ into level sets of
$\varepsilon$, and it bites only when the zero-mean geometric densities $g_\lambda$ have
nontrivial support in that pooled coordinate, which requires dispersive bands. In the
perfectly flat limit each band's level set is the whole zone, $g_\lambda$ vanishes
identically, and the clean identity $C_\lambda=2\pi\int_\lambda F$ becomes directly
usable with no deconvolution: the frequency integral then does return an integer.
Kruchkov and Ryu~\cite{KruchkovRyu2024} realize that limit, their integrated current
noise returning $Ce^{2}\Delta^{2}$ for two flat Chern bands. Their route is
complementary to ours in two further respects worth recording. Their multiband
expression, their Eq.~(16) (equation numbers refer to the published Letter, whose
numbering sits one below the arXiv version), carries the pair gaps
$\Delta_{nm}^{2}(\mathbf k)$ \emph{inside} the momentum sum, so the gap leaves the
integral only in the constant-gap case of their Eq.~(22); for dispersive partners their
own formula is a gap-weighted geometric integral, which is the pooled object above. And
their measured channel is the symmetric one, converted to topology through the
ideal-flat-band trace condition $\mathrm{Tr}\,\mathcal G=|\mathcal F|$ of their Eq.~(20),
which returns the modulus and so cannot resolve a signed triple. The same authors state
the dispersive case in the conductivity language, where the frequency-integrated
longitudinal sum rule returns the quantum-metric and Luttinger invariants and no Chern
number~\cite{KruchkovRyu2023}, which is the present result viewed from the other side.

We record a methodological point rather than dropping it. This sweep is the
moment-tomography problem of Sec.~\ref{sec:sumrule} in a different parametrization, and
the effective rank of that kernel, three to four at realistic precision, was already
known to us. Extracting three Chern numbers plus the shape freedom in each band window
requires more functionals than that.

A second and independent reason the integrated observable cannot isolate the harmonic
content arrives from a direction unconnected with rank. Equation~\eqref{eq:sumrule}
transforms $\chi''_A$ over all frequencies, so what it returns is the full transverse
coefficient, the intrinsic part together with whatever disorder contributes; and for
bosonic thermal Hall transport the extrinsic contribution is not a correction but a term
of the same order, one that can survive where the intrinsic one
vanishes~\cite{MangeolleKnolle}. The frequency integral therefore pools the intrinsic
against the extrinsic channel exactly as it pools the harmonic against the geometric
sector, and for the same reason: it discards the coordinate that separates them. The two
channels separate in frequency rather than in $\mathbf k$, the extrinsic weight being
collisional and sitting near the origin while the intrinsic weight is interband and sits
at the band-pair gaps, the familiar division in the anomalous Hall effect, and our
reading rather than a result of Ref.~\cite{MangeolleKnolle}, whose calculation is of the
static coefficient. Frequency resolution defeats the second pooling for the same reason
it defeats the first, and Sec.~\ref{sec:discussion} returns to what that buys.

\section{The frequency-resolved protocol}
\label{sec:partii}

The $\varepsilon$-integration is precisely where the integrated route of
Sec.~\ref{sec:parti} dies, and frequency resolution undoes it. The mechanism is not new
in itself: for two bands at zero temperature the frequency-resolved noise already returns
the geometric tensor on the contour of constant direct gap, so resolving $\omega$
resolves the gap coordinate~\cite{NeupertChamonMudry}. Figure~\ref{fig:integrand} shows
what that resolution exposes: the integrand of the sum rule, spread over the band-pair
gap and turning on at $2\sqrt3\,D$, in place of the single number its $\omega$ integral
returns. What must be supplied here is everything that multiband finite-temperature
operation adds: a decomposition that survives more than one pair, an identifiability
analysis against a nuisance class rich enough to be honest, and a criterion for when the
integers are recoverable at all. Work with the sum-rule integrand before
$\omega$-integration, binned in the gap variable, built from the pair decomposition
$\Om_n=\sum_m\Om_{nm}$ with $\Om_{nm}=-\Om_{mn}$. The data then resolve the gap
coordinate directly rather than pooling it, and the transform acquires the rank the sweep
lacked.

Three structural facts come with the pair decomposition, all checked to machine precision
in two independent implementations. The pair curvatures reconstruct the band curvatures
to $8.9\e{-16}$. The pair contents are reals but, unlike the $C_\lambda$,
convention-dependent: $(1.0061,-0.0061,1.0061)$ in the atomic convention of
Appendix~\ref{app:sign} and $(0.9050,0.0950,0.9050)$ in the periodic one, the two
differing by $0.1011$ exactly along the gauge direction identified below, which is the
third trap of Appendix~\ref{app:sign} exhibited in pair space. Only the antisymmetric
combinations
\begin{equation}
\label{eq:pairmap}
C_1=c_{12}+c_{13},\quad C_2=-c_{12}+c_{23},\quad C_3=-c_{13}-c_{23}
\end{equation}
are quantized, giving $C=(1,0,-1)$; the direction $(1,-1,1)$ in pair space is annihilated
by that map, so it is a gauge, profiled out with the nuisance, which makes every quoted
number gauge-invariant. And $\sum_\lambda C_\lambda=0$ becomes a \emph{theorem} rather
than an imposed constraint, since the band-summed curvature vanishes pointwise by pair
antisymmetry, verified to $3.3\e{-16}$, so the candidate lattice is the traceless plane
by construction.

The reversal relative to the integrated route is qualitative. Growing the nuisance from
three to thirty-five zero-content modes per pair collapses the model error while the
topological directions merely plateau, so the ratio of signal to model error \emph{grows}
where in the integrated observable it fell. Table~\ref{tab:scissors} gives one
implementation and Fig.~\ref{fig:scissors} the same numbers as a curve; the other shows
the same growth with different values, and the difference between them is itself
informative (Sec.~\ref{sec:robust}).

\begin{table*}[t]
\caption{Identifiability against nuisance completeness, 48 bins, full grid, atomic
convention. The ratio is not monotone: it turns over near twenty-five modes and then
jumps by almost two orders once the nuisance basis is complete enough for the model error
to collapse. What the argument requires is the value at completeness, not a monotone
approach to it. The values are implementation-dependent; the two implementations agree on
the ratios.}
\label{tab:scissors}
\begin{ruledtabular}
\begin{tabular}{lcccc}
modes/pair & model error & chirality $(1,0,-1)$ & split $(0,1,-1)$ & split/error \\
\colrule
$3$  & $2.72\e{-1}$ & $7.27\e{-1}$ & $1.67\e{-1}$ & $0.6$\\
$9$  & $1.08\e{-1}$ & $4.04\e{-1}$ & $8.25\e{-2}$ & $0.8$\\
$17$ & $4.13\e{-2}$ & $2.70\e{-1}$ & $5.55\e{-2}$ & $1.3$\\
$25$ & $1.37\e{-2}$ & $3.89\e{-2}$ & $3.48\e{-3}$ & $0.3$\\
$35$ & $4.32\e{-5}$ & $1.79\e{-2}$ & $9.97\e{-4}$ & $23.1$\\
\end{tabular}
\end{ruledtabular}
\end{table*}

\begin{figure}[t]
\includegraphics[width=\columnwidth]{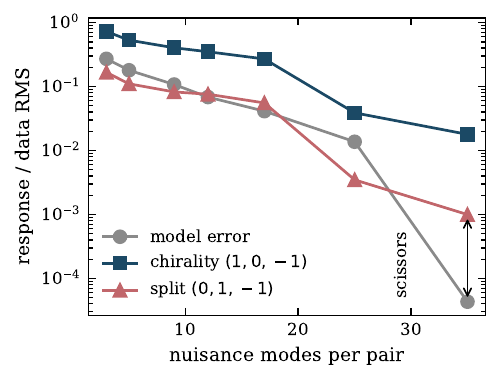}
\caption{Identifiability against nuisance completeness, the numbers of
Table~\ref{tab:scissors} plotted as a curve. Across the range the model error falls by
four orders while the two topological responses fall by little more than one, so the gap
between them opens; the ratio itself is not monotone, and what the argument uses is its
value once the basis is complete.}
\label{fig:scissors}
\end{figure}

\begin{figure*}[t]
\includegraphics[width=\textwidth]{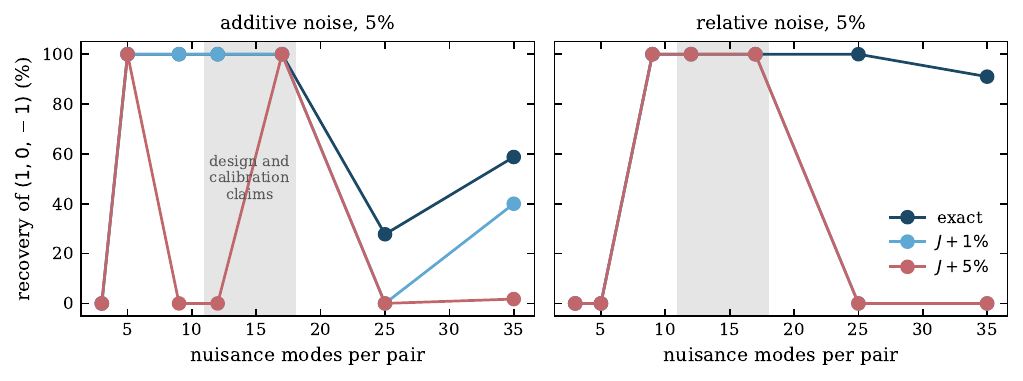}
\caption{Recovery of the true triple against nuisance mode count, for exact and
miscalibrated dispersion, under both noise models, at four hundred trials per point.
Recovery is not monotone in the mode count, and the exact and perturbed cases move in
opposite directions at adjacent counts. The shaded band is where the design and
calibration results of Sec.~\ref{sec:robust} are established; the identifiability
argument of Sec.~\ref{sec:partii} is established at thirty-five modes, outside it. Both
the structure and the separation follow from holding the resolution fixed at forty-eight
bins: Fig.~\ref{fig:gap} shows the same data as a function of $M_{2}/N_{\rm bins}$ and
removes the separation. We show the curve rather than a single count because at fixed
resolution the choice of count is a knob that moves the answer both ways, and a criterion
stated after the fact would not be a criterion.}
\label{fig:recovery}
\end{figure*}

\begin{figure*}[t]
\includegraphics[width=\textwidth]{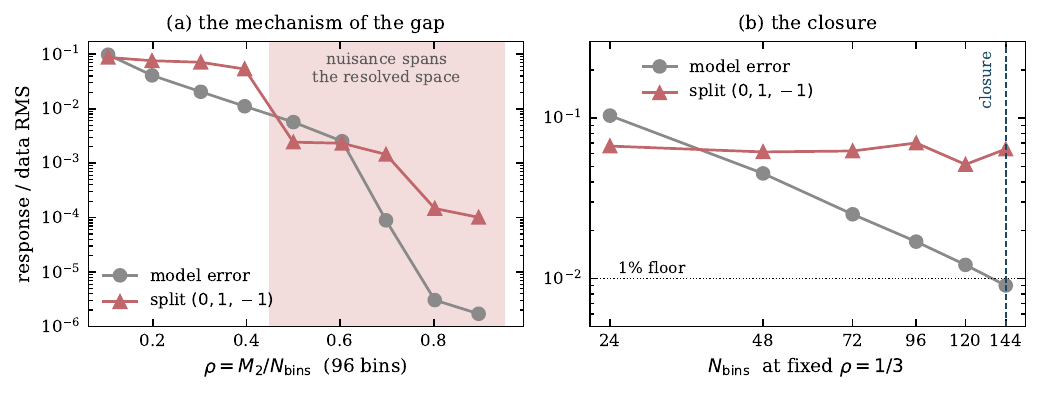}
\caption{The mode-count gap and its closure. (a) At fixed resolution the split response
collapses once the nuisance dimension approaches the resolved dimension in the gap
coordinate, near $\rho=M_{2}/N_{\rm bins}\simeq0.45$ (shaded): growing the nuisance then
buys model error by spending signal, and this is the mechanism behind the apparent
tension between the design results and the identifiability argument. (b) Along fixed
$\rho=1/3$ there is no trade. The model error falls roughly as $N_{\rm bins}^{-1.4}$ while
the split stays near $6\e{-2}$, so the separation opens by raising resolution. The one per
cent floor is reached at one hundred and forty-four bins, where the design reduction, the
identifiability argument and the calibration tolerance hold simultaneously. Full grid,
five per cent additive noise, forty trials per point on a $120\times120$ mesh (entries at
one hundred per cent carry a rule-of-three bound of $7.5\%$); the closure numbers quoted
in the text are at one thousand trials on the same mesh, where the model error at the
closure point is $9.0\e{-3}$, meeting the floor with about ten per cent of margin; the
eight closure cells at one thousand of one thousand carry the same rule-of-three bound,
here $0.3\%$, since a normal standard error is not meaningful at the boundary. Atomic
convention.}
\label{fig:gap}
\end{figure*}

\section{Robustness, design and error budget}
\label{sec:robust}

A protocol is only as good as its failure modes, which we now characterize.

\subsection{The one-bin control as an exact identity}

Collapsing to a single frequency bin reproduces the integrated observable of
Sec.~\ref{sec:parti} setting by setting, to $7\e{-15}$. One bin \emph{is} that observable,
verbatim, so the control is an identity rather than an analogy. At one bin the signal and
the model error alike fall to numerically meaningless levels, and their ratio is unstable
between implementations, reading $954$ in one and $320$ in the other. That
irreproducibility is the argument for an absolute criterion: feasibility is decided by
the signal against experimental precision, and the ratio to model error only checks that
the nuisance is not swallowing the signal. On the absolute criterion the one-bin signal
is about $10^{-10}$ of the data RMS, ten orders below any experimental precision.

\subsection{Estimator bias under incomplete nuisance}

With a rigid nuisance the systematic misfit exceeds the signal and projects onto the
candidate lattice, and \emph{which} wrong triple it favors depends on the nuisance basis.
At three modes per pair the zero-noise continuous fit of one implementation returns
$(1.167,-2.208,1.041)$ and recovers zero per cent; the other recovers a hundred. Neither
is a coding error. Consequently Monte Carlo recovery figures must be quoted only where
the model error lies below the stated precision floor, and the nuisance basis and mode
count must be reported beside every figure. We follow that rule throughout.

\subsection{The resolution requirement}

The split signal grows strongly with bin count and depends jointly on nuisance
completeness. The obvious prescription, to fix a precision floor $p$ and grow the nuisance
until the model error falls below $p$, is not safe on its own. The counterexample is the
nine-setting design at the largest mode count its data support, thirty-five modes per
pair at forty-eight bins, outside the matched-flexibility comparison of
Table~\ref{tab:design}: the model error is $1.2\e{-5}$, below any floor one would state,
while the split falls to $1.4\e{-4}$ and recovery collapses to a few per cent. With few
data the nuisance can reach the floor by absorbing the signal rather than by describing
the geometry. The remedy is to grow the nuisance and the resolution together at fixed
$\rho=M_{2}/N_{\rm bins}\lesssim1/3$, the condition that the nuisance dimension not
exceed the resolved dimension, and to raise $N_{\rm bins}$ until the model error falls
below $p$; along that line the split does not fall (Fig.~\ref{fig:gap}). The quantity to
check is the split against the \emph{aggregate} noise,
\begin{equation}
\label{eq:R}
R\;=\;\nu_{\rm split}\,\sqrt{P}\,/\,\sigma ,
\end{equation}
with $P$ the number of data points and $\sigma$ the per-point noise. Across twenty
configurations spanning four designs, recovery is essentially complete wherever $R$
exceeds about four, partial between one and three, and collapses below one. The
transition is broad rather than sharp: a pool sampling only the extremes suggests a clean
gap, and filling the middle removes it, so we quote the trend, not a threshold. The ratio
of model error to split, the natural thing to reach for, does \emph{not} serve: over the
same set it runs from $1.00$ to $4.39$ among the successes and from $0$ to $13.1$ among
the failures, and its two largest values belong to failures, because a systematic misfit
harms the discrete selection only through its projection onto the candidate directions, a
projection the ratio does not measure. Quote $R$ and the model error together, with the
basis and count stated. Two crossings are involved, and they are not the same. At fixed
completeness the split rises with bin count and crosses a one per cent floor between
twenty-four and thirty-six bins at seventeen modes per pair, and between seventy-two and
ninety-six at thirty-five: that is detectability, and it is why frequency resolution is
needed at all. The model error falls with bin count and crosses the same floor near one
hundred and forty-four bins along $\rho=1/3$: that is the floor itself, and it is what
fixes the requirement below. The crossing is mesh-sensitive at the level of one bin step,
so we quote the verified point rather than the crossing: on a converged mesh the model
error first dips below one per cent at one hundred and thirty-two bins, but only barely,
$9.9\e{-3}$ against the floor, whereas at one hundred and forty-four it is
$9.0\e{-3}$, a ten per cent margin, and it is there that the recovery battery was run.
The mesh moves these numbers by a few per cent in either direction, so a crossing quoted
to a single bin step would be spurious precision. For Cu(1,3-bdc), where the bins span
the $38$ to $473$~GHz support of the band-pair gap, one hundred and forty-four bins is
about $3.0$~GHz against $9.1$~GHz at forty-eight.

The width of that window is a requirement, not a convenience, and it is worth saying why,
since covering it is the largest experimental demand the protocol makes. Truncating the
window destroys the extraction, and abruptly. Masking the bins above a cutoff and holding
$\rho=1/3$ against the bins that remain, the split response at cutoffs of $141$, $186$
and $232$~GHz is $10^{-14}$ or smaller, exactly degenerate with the nuisance, and at
$322$~GHz it is $6\e{-8}$, still six orders below its full-window value; recovery is zero
in every case. At $413$~GHz the split returns to $4.4\e{-2}$ with $R=115$, and recovery is
still only three per cent. That last row is the
one to note. A high $R$ with no recovery is not a weak measurement but a biased one: the
truncated fit confidently selects a wrong triple, so a partial window returns a wrong
answer rather than a null result.

The mechanism is a property of the band structure rather than of the signal. The three
band pairs occupy nearly disjoint stretches of the gap coordinate: at $D/J=0.2$ the pair
$(1,2)$ spans $50$ to $385$~GHz, the pair $(2,3)$ spans $95$ to $192$, and the pair
$(1,3)$ spans $243$ to $485$. The widest pair therefore contributes nothing below
$243$~GHz and is the sole contributor above $385$. Cutting the top of the window removes
the only region in which that pair is isolated, its content goes unmeasured, and the
split direction, a difference of contents involving it, becomes degenerate. The
requirement that every band pair be resolved somewhere sets the window, and the highest
pair fixes the ceiling. This is also the sense in which the protocol is a spectroscopy
rather than a noise measurement with a frequency axis attached: it needs the whole
spectrum because each pair is a separate observable.

\subsection{Experimental design under matched flexibility}

Table~\ref{tab:design} re-runs the design comparison at matched nuisance flexibility,
using the largest mode count every design supports. Three conclusions survive. Frequency
resolution alone is insufficient: at a single setting the split is degenerate with the
nuisance to $2\e{-14}$, exactly zero, so this is a statement about degeneracy, not a
small number. At this mode count nine settings match one hundred and forty-four at every
noise level tested, a sixteen-fold reduction in measurement. And temperature is the more
important axis.

\begin{table*}[t]
\caption{Design comparison at matched nuisance flexibility (twelve modes per pair, the
largest every design supports), forty-eight bins, atomic convention, one thousand trials
per level. Recovery of the true triple at the stated per-point additive noise. The script
producing this table is named in Appendix~\ref{app:code}, as are those for every other
computed quantity. The binomial standard error is at most $1.6$ percentage points;
entries at $0$ and $100$ carry a rule-of-three bound of $0.3\%$ rather than an exact
value.}
\label{tab:design}
\begin{ruledtabular}
\begin{tabular}{lcccccc}
design & points & split & $1\%$ & $5\%$ & $10\%$ & $20\%$\\
\colrule
single $(T,h)$ & $48$   & $2\e{-14}$   & $0$   & $0$   & $0$   & $0$\\
$3T\times1h$   & $144$  & $6.6\e{-2}$  & $100$ & $100$ & $99.6$ & $91.2$\\
$1T\times3h$   & $144$  & $1.1\e{-2}$  & $99.9$ & $71.0$ & $45.3$ & $25.0$\\
$3T\times3h$   & $432$  & $8.0\e{-2}$  & $100$ & $100$ & $100$ & $100$\\
full $12\times12$ & $6912$ & $7.6\e{-2}$ & $100$ & $100$ & $100$ & $100$\\
\end{tabular}
\end{ruledtabular}
\end{table*}

At forty-eight bins these results and the identifiability argument of
Sec.~\ref{sec:partii} sit at different nuisance flexibilities, twelve to seventeen modes
per pair against thirty-five, and the two do not coincide: at thirty-five modes the
nine-setting recovery collapses, while at seventeen the model error is four per cent.
That apparent tension is an artifact of holding the resolution fixed, and
Fig.~\ref{fig:gap} shows why. What controls the split response is not the mode count but
its ratio to the resolved dimension, $\rho=M_{2}/N_{\rm bins}$. Below
$\rho\simeq0.45$ the split sits at its natural size of a few times $10^{-2}$; above it
the nuisance spans the resolved space in the gap coordinate and the split collapses by
two orders, taking the recovery with it. Growing the nuisance at fixed resolution thus
buys model error by spending signal, which is the trade the previous subsection warned
about, in quantitative form.

Along a line of fixed $\rho=1/3$ the trade disappears. The model error falls roughly as
$N_{\rm bins}^{-1.4}$ while the split stays put near $6\e{-2}$, so the scissors open by
raising resolution rather than mode count, and they open without limit. A one per cent
model error is reached at $N_{\rm bins}=144$ with $M_{2}=48$, and at that point every
claim in this section holds at once. At five per cent additive noise the recovery of
$(1,0,-1)$ is one thousand of one thousand trials for exact calibration and for
$J\pm1\%$, $J+5\%$ and $D+5\%$ alike, on the full grid and on nine settings, while the
misfit still announces itself, the residual standing at six times the exact-model floor
for a one per cent error in $J$ and thirteen times for five per cent. At forty-eight bins
the same five per cent error recovered nothing.

The rule $\rho\le1/3$ does two jobs, the second the less obvious. One hundred and twenty
bins is not enough, and the reason is not the floor alone: at $\rho=1/3$ the model error
there is $1.2\e{-2}$, and the mode count that does reach one per cent at that resolution,
$M_{2}=44$, sits on the knife edge of Fig.~\ref{fig:recovery}, which has not disappeared
at high resolution but has become a property of $\rho$. At every resolution tested the
miscalibrated recovery collapses for $\rho$ between about $0.36$ and $0.39$, well inside
the $\rho\simeq0.45$ collapse of the exact-model split, and it does so non-monotonically:
at one hundred and twenty bins $J+5\%$ recovers fully at $M_{2}=42$, not at all at $44$,
fully again at $46$ and not at all at $48$. Holding $\rho$ at a third keeps the protocol
clear of that edge as well as of the signal collapse, which is why we state it as the
rule rather than stating a mode count.

The closure is conditional on the design carrying a temperature axis: 
at one hundred and twenty bins with forty-two modes per pair, on the $N=90$ mesh, the
three-temperature designs and the full grid reach $R=12.5$, $22.4$ and $113$ and recover
completely,
while the field-only $1T\times3h$ design reaches $R=0.2$ and recovers
seven per cent, so no resolution rescues it. The remaining requirement is experimental
rather than theoretical. The bins span the full support of the band-pair gap, which for
Cu(1,3-bdc) at $D/J=0.15$ is $38$ to $473$~GHz, so one hundred and forty-four bins is
about $3.0$~GHz and forty-eight bins about $9.1$~GHz; the $38$--$147$~GHz figure quoted
elsewhere is where the main weight sits, not the range the bins cover. We state the
design and calibration results at twelve modes because that is where they were computed
and where the comparison across designs is fair, and we state the resolution requirement
as the condition under which they and the identifiability argument coexist.

Recovery also improves with the number of data points as well as with the signal, since
per-point noise averages down as the square root of the count: for forty-eight bins
across one hundred and forty-four settings there are $6912$ points and $\sqrt{6912}=83$,
so twenty per cent per point is a quarter of a per cent in aggregate. Design and noise
level must therefore be quoted together.

\subsection{Calibration: dispersion, thermometry, and field}

The protocol assumes the band structure is known. Table~\ref{tab:calib} shows what
happens when it is not. Dispersion is by far the tightest requirement: $J$ must be right
to about $0.1$ per cent and $D$ to about $0.5$ per cent, and a five per cent error in $J$
produces a confidently wrong triple. Field is looser, at roughly one per cent, and
temperature is benign, five per cent leaving recovery unchanged because the
induced drift is nearly along the gauge direction.

\begin{table*}[t]
\caption{Calibration sensitivity, full grid, complete nuisance (thirty-five modes per
pair), five per cent per-point additive noise, one thousand trials, atomic convention.
Data generated at the true parameters, model built at the perturbed ones. The residual is
the full-model model error, $4.3\e{-5}$ for exact calibration. Columns four and five give
the residual and recovery for the \emph{negative} error of the same size; the binomial
standard error is at most $1.6$ percentage points throughout. The dispersion rows are
strongly asymmetric and the thermal rows are not, for the reason given in the text.}
\label{tab:calib}
\begin{ruledtabular}
\begin{tabular}{lccccc}
miscalibration & residual & zero-noise fit & recovery & residual$^{-}$ & recovery$^{-}$\\
\colrule
exact     & $4.3\e{-5}$ & $(1.00,0.08,-1.08)$  & $58.7$ & --- & ---\\
$J\pm1\%$ & $1.6\e{-3}$ & $(1.05,0.21,-1.26)$  & $37.4$ & $1.5\e{-3}$ & $55.7$\\
$J\pm5\%$ & $8.2\e{-3}$ & $(1.25,-0.11,-1.14)$ & $0.9$  & $9.3\e{-3}$ & $47.2$\\
$D\pm1\%$ & $1.2\e{-4}$ & $(1.00,-0.02,-0.99)$ & $19.4$ & $2.9\e{-4}$ & $54.1$\\
$D\pm5\%$ & $2.2\e{-2}$ & $(0.98,7.19,-8.17)$  & $4.2$  & $4.5\e{-4}$ & $75.8$\\
$T\pm1\%$ & $2.4\e{-3}$ & $(1.05,-0.35,-0.71)$ & $59.3$ & $2.5\e{-3}$ & $57.4$\\
$T\pm5\%$ & $1.2\e{-2}$ & $(1.26,-1.77,0.52)$  & $62.5$ & $1.3\e{-2}$ & $51.0$\\
$h\pm1\%$ & $2.3\e{-3}$ & $(0.92,0.62,-1.53)$  & $59.5$ & $2.4\e{-3}$ & $56.1$\\
$h\pm5\%$ & $1.2\e{-2}$ & $(0.57,2.81,-3.38)$  & $42.1$ & $1.2\e{-2}$ & $35.4$\\
\end{tabular}
\end{ruledtabular}
\end{table*}

The tolerance is one-sided, and only for the dispersion. Comparing the two signs at fixed
magnitude, $J+1\%$ recovers $37.4\%$ against $55.7\%$ for $J-1\%$, and $D+5\%$ recovers
$4.2\%$ against $75.8\%$ for $D-5\%$, while the thermal rows are symmetric to within
their error bars, $59.3$ against $57.4$ at $T\pm1\%$ and $59.5$ against $56.1$ at
$h\pm1\%$. The reason is that $J$ and $D$ enter the gap function and so move the
analyst's bin edges and the turn-on at $2\sqrt3\,D$ relative to the truth, whereas $T$
and $h$ enter only the thermal weights and leave the binning alone. A dispersion
tolerance quoted as $\pm x\%$ therefore understates one direction and overstates the
other, and we quote both.

Two facts turn this from a defect into a specification. The misfit is self-announcing: a
wrong calibration lifts the residual between one and a half and three orders above the
exact-model floor, so it cannot hide, though the margin is smallest exactly where the
error is smallest, a tenth of a per cent in $J$ lifting it by well under an order. And
co-fitting largely cures it: profiling the calibration parameters over a grid jointly
with the candidate triple, on the grid
$J'\times D'=\{0.95,1.00,1.05\}\times\{0.19,0.20,0.21\}$ with the truth on a node,
recovers the correct triple in $89.4(1.4)\%$ of trials under relative per-point noise and
identifies the correct calibration node in $96.6(0.8)\%$. Under additive noise at the
same level the triple is recovered in $61.8(2.2)\%$ and the node in $57.0(2.2)\%$, so the
remedy is substantially weaker there and the two should not be averaged. The same data
over-determine the calibration precisely because the order-one sensitivity that makes the
protocol fragile makes the calibration easy.

We therefore specify the protocol as follows: \emph{the band structure, temperature and
field are co-estimated from the measured spectrum itself}, with independently determined
values as the seed and the residual as the goodness-of-calibration diagnostic; the Chern
triple is read off at the calibrated node. The effective sub-per-cent calibration is
delivered by the fit rather than demanded of the input data.

\section{Application to Cu(1,3-bdc)}
\label{sec:material}

We collect the specification in one place. The material is the kagome ferromagnet
Cu(1,3-bdc), for which the magnon bands and their Chern numbers are established and the
thermal Hall effect has been measured~\cite{Hirschberger,Chisnell}. Neutron parameters
give $J=0.6$~meV, $S=1/2$ and $D/J\approx0.15$; the Land\'e factor inferred from the
exponential suppression of the thermal Hall signal is $g\approx1.6$.

The magnon energy unit is $JS=0.3$~meV, and every dimensionless quantity in the tables
above is quoted in it; the tables themselves are computed at the model value $D/J=0.2$,
the nearest round value to the material's $0.15$, kept for continuity with the convention
checks of Appendix~\ref{app:sign}. At the material's own $D/J=0.15$ a Zeeman gap of
$0.3\,JS$ requires about $1$~T, inside the range already swept. The spectrum turns on at
the minimal direct gap $2\sqrt3\,D=0.52\,JS$, which is $38$~GHz, with the main weight
near $2.0\,JS$, or $147$~GHz; ninety-six bins is about $4.5$~GHz resolution and
forty-eight bins about $9.1$~GHz, the bins spanning the full $38$--$473$~GHz support of
the band-pair gap rather than the interval just quoted, which is where the weight sits.
The design is nine settings, three temperatures by three fields, with a spectrum at each.

Two of these numbers are material-specific and the rest are not, which gives the
experiment a design freedom worth stating. The protocol is scale-free in $JS$: the gap
support, the temperature range and the Zeeman fields are all fixed multiples of it, so
they move together, and only their common scale is a property of the compound.
Table~\ref{tab:scaling} follows that family. Cu(1,3-bdc) sits at the top row, where the
frequency reach is the demanding entry and the cryogenics undemanding; a compound with
half the exchange halves the reach and the fields at the cost of a lower base temperature,
and one with a quarter brings the ceiling to $121$~GHz, within reach of a single
millimetre-wave front end, at $0.44$~K and sub-gigahertz bins. The reach and the
cryogenics therefore trade against each other along a one-parameter family, and a
proposal should choose where on it to sit rather than treat the numbers of the top row as
the specification. For Cu(1,3-bdc) itself the window spans the range covered by
heterodyne receivers in radio astronomy, where cross-correlation is the native
measurement, so the requirement is that the spectrum be assembled from sub-bands under a
common calibration rather than acquired in one instrument.

\begin{table*}[t]
\caption{The protocol is scale-free in the magnon energy unit $JS$, so the frequency
window, the temperatures and the fields are fixed multiples of it and move together. The
first row is Cu(1,3-bdc). Frequencies are the full band-pair gap support at $D/J=0.2$;
the bin width is at one hundred and forty-four bins; temperatures span $0.5$ to $2\,JS$
and fields $0.3$ to $2\,JS$ at $g=1.6$.}
\label{tab:scaling}
\begin{ruledtabular}
\begin{tabular}{ccccc}
$JS$ (meV) & window (GHz) & bin width (GHz) & $T$ (K) & $B$ (T)\\
\colrule
$0.30$  & $50$--$485$ & $3.02$ & $1.7$--$7.0$  & $1.0$--$6.5$\\
$0.20$  & $34$--$324$ & $2.01$ & $1.2$--$4.6$  & $0.7$--$4.3$\\
$0.15$  & $25$--$243$ & $1.51$ & $0.9$--$3.5$  & $0.5$--$3.2$\\
$0.10$  & $17$--$162$ & $1.01$ & $0.6$--$2.3$  & $0.3$--$2.2$\\
$0.075$ & $13$--$121$ & $0.76$ & $0.4$--$1.7$  & $0.2$--$1.6$\\
\end{tabular}
\end{ruledtabular}
\end{table*}

The observable is the antisymmetric part of the current cross-spectrum between orthogonal
components, which suppresses uncorrelated detector noise; that suppression averages down
as the square root of the sample count rather than vanishing, so the requirement falls on
integration time,
\begin{equation}
\label{eq:tint}
T_{\rm int}\;\ge\;\frac{S^{\rm det}_xS^{\rm det}_y/(S^{A})^{2}}{\Delta\nu},
\end{equation}
which at a resolution bandwidth of order $1$~GHz gives about a millisecond for a
floor-to-signal ratio of $10^{3}$, $0.1$~s for $10^{4}$ and $10$~s for $10^{5}$. Integration
time is not the binding constraint; bandwidth and interface transparency at these
frequencies are. The pickup geometry, finally, is unconstrained by the subtraction: by
Lemma~\ref{lem:chiral} no magnetization auto-correlation enters $S^{A}$ whatever the detection
realizes, a complete section, an interfacial pickup or a probe at standoff, the
equilibrium correlator on this lattice satisfying the evenness hypothesis of
Sec.~\ref{sec:magnetization}; only the symmetric floors in the ratio $r$ and in
Eq.~\eqref{eq:tint} acquire the curl admixture at standoff, a budget item rather than a
correctness one.

\section{Three dimensions}
\label{sec:3d}

Everything above is two-dimensional, where the co-exact sector of the curvature two-form
is empty and only two of the three Hodge sectors are exercised. In three dimensions all
three appear, the co-exact sector carrying the monopole sources. A three-dimensional
bosonic Weyl prototype confirms the structure: the slice Chern numbers step across the
nodes, the monopole charges are $\pm1$, the decomposition closes exactly, and all three
sectors contribute to the thermal Hall response. In our prototype the co-exact
contribution peaks at about twelve per cent of the harmonic, so it is present but not
dominant, and we state that rather than gloss it.

We do not develop the three-dimensional case here, for a reason of positioning rather
than difficulty: Weyl magnons in pyrochlore ferromagnets and their chiral
anomaly~\cite{MookWeyl,SuWang} are established, so a three-dimensional mean-side
calculation would be a second benchmark rather than a new opening. The noise question in
three dimensions remains open and is the natural continuation of this work.

\section{Discussion}
\label{sec:discussion}

The result of this paper is a negative statement turned positive. The electronic theorem
that topology is silent in transport noise is true, and its proof short, but it rests on a
hypothesis nobody states, namely that the topological sector is a single number. That
hypothesis is a property of the Fermi surface, not of the Berry curvature. Remove the
Fermi surface, as bosons do, and the hypothesis fails; what replaces it is the
occupation-weighted dispersion of Chern numbers, a genuinely different observable and one
we have not found in the literature.

Three things were needed to make this measurable rather than merely true. The
energy-magnetization subtraction had to be settled at the level of fluctuations, and it
dissolved: the subtracted term is a curl, it delivers nothing through a complete
cross-section realization by realization, and the antisymmetric cross-spectrum cannot
receive the magnetization auto-correlation at any wavevector.
The driven protocol had to be
abandoned (Appendix~\ref{app:driven}), because no achievable magnetic gradient drives the
signature to within twelve orders of the equilibrium floor, and the thermal case, estimated
there rather than derived, lands on the same ladder.
And the
equilibrium route had to be resolved in frequency, because for dispersive bands the
integrated form cannot in practice separate the sectors, for a rank reason we exhibit
rather than assert,
and for a second reason unconnected with rank, since the frequency integral also
pools the intrinsic channel against the extrinsic one (Sec.~\ref{sec:parti}). That last
point is where the paper meets its nearest neighbor. That current noise carries band
geometry, and that frequency resolution resolves the gap coordinate, are both established
at zero temperature for band insulators~\cite{NeupertChamonMudry}; what was left open
there was the multiband case, where energetics and geometry combine. This paper answers
that question in the setting where it has an answer, and the finite-temperature occupation
weighting that makes the observable a dispersion rather than a single invariant is the
same feature that makes the multiband case tractable.

We close with the limitations. The result is established for collinear ferromagnetic
magnons, where number conservation holds; Bogoliubov systems need the symplectic
construction and are untreated. The protocol's calibration requirement is severe and is
met by co-estimation rather than external measurement, a design choice that should be
tested against real data rather than simulated. The three-dimensional case, where the
co-exact sector is non-empty, is open. And the fluctuation theory of statistical drives,
which would underwrite a real temperature gradient at the level of noise, does not yet
exist; it is delimited in Appendix~\ref{app:driven}, together with the finite-$q$
statistical-magnetization fluctuation and the $O(q^{2})$ counting of the
transport-magnetization cross term, and belongs to the heat-channel continuation. A
second limitation of scope belongs here too. The vertices throughout are the intrinsic
ones, and disorder enters only through the relaxation rates of Sec.~\ref{sec:collision};
extrinsic side-jump and skew-scattering contributions, which for bosonic thermal Hall
transport can match the intrinsic ones and can survive where they
vanish~\cite{MangeolleKnolle}, are not treated. Two features of that treatment bound its
reach here. The disorder of Ref.~\cite{MangeolleKnolle} is correlated in space but static
in time, so the scattering is elastic and none of their impurity classes enters the
energy-relaxation channel, which makes the widths-only assignment of
Sec.~\ref{sec:collision} safe within their treatment; and their construction builds on
Ref.~\cite{MangeolleSavaryBalents} and cites Ref.~\cite{QinNiuShi} for the
energy-magnetization complications, so bound currents are already isolated in their
framework and Theorem~\ref{thm:vanish} adds nothing there. Whether the extrinsic
contributions contaminate the extraction is a question we leave open, with two remarks
that bound it. A side-jump correction modifies the current vertex, hence the observable
being projected, so the decomposition alone does not exclude it from the number-mode
weight. For the frequency-resolved protocol, however, the extraction window lies at
interband frequencies, where quasi-elastic extrinsic contributions enter only through
collisional broadening, suppressed by the ratio of the scattering rate to the minimal
direct gap $2\sqrt3\,D$; and a smooth extrinsic admixture in the gap coordinate is
absorbed by the nuisance, biasing the selection only through its projection onto the
candidate directions, which the residual diagnostic of Sec.~\ref{sec:robust} monitors.
Turning the first remark into a theorem, with the small parameter stated, belongs to the
continuation.

The second remark can be measured rather than asserted, and it is more robust than we
expected. Broadening the data in the gap coordinate with a Lorentzian kernel, normalized
column-wise and with absorbing edges so that weight carried below the turn-on is lost,
and fitting the unbroadened model to it, the recovery of $(1,0,-1)$ at the closure point
and five per cent noise is unaffected out to $\Gamma=0.35\,JS$, half the turn-on at the
model coupling $D/J=0.2$ at which the battery is run.
The continuous fit drifts along the gauge-adjacent direction $(0,-1,1)$, reaching
$(1.00,-0.33,-0.67)$ at that width and still rounding correctly, and the misfit announces
itself as it should, the residual rising from $9.0\e{-3}$ to $3.9\e{-2}$ and then
saturating. The discreteness of the candidate lattice is doing here what it does for the
five per cent dispersion tolerance of Sec.~\ref{sec:robust}: a bias that would ruin a
continuous estimate leaves the integer selection intact.

One consequence deserves statement as a check declared in advance, because it costs the
experiment nothing. The intrinsic theory predicts $S^{A}$ identically zero below the
turn-on, no band pair having a gap smaller than $2\sqrt3\,D$. Any antisymmetric weight
measured on $[0,2\sqrt3\,D)$ is therefore not signal but a direct reading of the
extrinsic admixture together with whatever asymmetry the apparatus contributes, taken
from the same run and the same calibration as the data. The window has a null channel
attached to it, and an experiment that reports the spectrum below the turn-on reports its
own systematic error along with its result.

\section*{Data availability}

This is a theoretical and computational study; no experimental data were generated. Every
number, table and figure in the paper is produced by the deposited code described in
Appendix~\ref{app:code}, openly available in the Zenodo repository at
\mbox{DOI:10.5281/zenodo.21945238}. The scripts require only \textsc{Python}~3 with
\textsc{NumPy} and \textsc{Matplotlib}; two of them additionally use \textsc{SymPy} or
\textsc{mpmath}, and none reads an external data file, so the results are regenerated
from the model rather than from stored output.

\begin{acknowledgments}
The numerical results were obtained with two independent implementations, written from
the specification rather than from shared code; both are deposited. We thank the authors
of the works cited in Sec.~\ref{sec:claims} for the frameworks this paper applies.
\end{acknowledgments}

\appendix

\section{Conventions and the sign anchor}
\label{app:sign}

Two sign conventions circulate, differing by an overall factor, so we fix ours against an
anchor that uses neither eigenvectors nor a Berry connection. For a two-band
$H=\mathbf d\cdot\boldsymbol\sigma$ the degree integral
$\frac{1}{4\pi}\int\hat{\mathbf d}\cdot(\partial_x\hat{\mathbf d}\times\partial_y\hat{\mathbf d})$
is pure geometry of the $\hat{\mathbf d}$ map and admits no gauge or orientation
ambiguity. In Berry's convention, $A_\mu=i\langle u|\partial_\mu u\rangle$ and
$\Om=\partial_xA_y-\partial_yA_x$, one has
$\Om_{\rm lower}=+\frac12\hat{\mathbf d}\cdot(\partial_x\hat{\mathbf d}\times\partial_y\hat{\mathbf d})$
exactly, and the consistent Kubo form is
\begin{equation}
\label{eq:kubo}
\Om_n=-2\,\mathrm{Im}\sum_{n'\neq n}
\frac{\langle n|\partial_xH|n'\rangle\langle n'|\partial_yH|n\rangle}
{(\varepsilon_n-\varepsilon_{n'})^{2}} .
\end{equation}

Two traps follow. First, the verbatim Fukui--Hatsugai--Suzuki link product equals
$\exp(\oint\langle u|\dd u\rangle)=\exp(-i\oint A)$, so its angle is minus the Berry flux
and a by-the-book implementation returns $-C$; the fix is to negate the plaquette angle.
Second, specific to the kagome model, the symmetric $\cos(\mathbf k\cdot\mathbf d_{ij})$
form of $H$ is only antiperiodic under reciprocal translations, since
$\mathbf b_1\cdot\mathbf d_{AB}=\pi$, so a plaquette sum over a single zone in that gauge
joins eigenvectors of different Hamiltonians and is silently wrong. It requires the
canonical Bloch gauge, $H_{AB}=-(1+iD\nu)(1+e^{-i\mathbf k\cdot\mathbf a_1})$ and cyclic,
which is exactly periodic. The Kubo route is pointwise and immune to this second trap,
though not, as the next paragraph shows, to the third. Both went live during the
independent reimplementation, and both were caught by this anchor.

The third trap is that the two Bloch conventions are not interchangeable for anything
pointwise. They differ by a $\mathbf k$-dependent unitary diagonal in sublattice index,
which is not a $U(1)$ gauge transformation of the eigenvectors, so it redistributes $\Om$
and $m$ across the zone. Only in the atomic convention, where the intracell positions sit
in the phases, is $\partial H/\partial\mathbf k$ the true velocity operator; the
diagnostic is parity, which on this inversion-symmetric lattice holds to machine zero in
the atomic convention and fails at order unity in the periodic one. The rule is
therefore: $C$ may be computed in either convention, but every pointwise, weighted or
nonlinear quantity, and every parity argument, including the vanishing of the
transport-magnetization cross term at $q=0$ and the $O(q^{2})$ counting that follows
it, is computed in the atomic convention, while the link product is computed in the
periodic one. The second and third traps thus pull in opposite directions, and the two
computations belong in different conventions. One caution for anyone reusing these
numbers: the agreement of $\sum\!\int\varepsilon\,\Om$ between the two conventions on
this model is not a general fact but an accident of inversion symmetry, which makes the
difference parity odd; breaking inversion by scaling the up-triangle bonds separates the
two values by several per cent while $C$ still agrees.

With $\nu_{AB}=\nu_{BC}=\nu_{CA}=+1$ and $D>0$ the anchored convention gives
$C=(+1,0,-1)$, lowest band first, from both routes. 
The chain ties to experiment: under
this convention $\sum\!\int c_2\Om>0$ across the verified range $0.05\le T\le100\,JS$,
sampled densely in $T$ at meshes $N=90$ and $120$, so $\kxy<0$ throughout it, while
$\sum\!\int\varepsilon\,\Om\,\dd^{2}k=-19.9$ at the model coupling $D/J=0.2$, in $JS$
units and the plain $\dd^{2}k$ measure, without the $1/2\pi$ of Eq.~\eqref{eq:harmonic},
so the coefficient of the $1/T$ tail of $\sum\!\int c_2\Om$ is positive and $\kxy$
saturates at the negative constant it fixes rather than changing sign; that limit is the
high-temperature figure of merit of Ref.~\cite{MookHenkMertig2014}. In between, the
weight itself is not monotone: $\sum\!\int c_2\Om$, and with it $|\kxy|/T$ by
Eq.~\eqref{eq:kxy}, peaks at $T=1.64\,JS$ at the model coupling and at $1.66\,JS$ at the
material's $D/J=0.15$, while $T\sum\!\int c_2\Om$, hence $|\kxy|$ itself, rises
monotonically to its saturation value.

Two further anchors serve the self-rotation vertex of Sec.~\ref{sec:magnetization}. In
the eigenbasis,
\begin{equation}
\label{eq:manchor}
m_n(\mathbf k)\;=\;+\frac{1}{\hbar}\,\mathrm{Im}\sum_{n'\neq n}
\frac{\langle n|\partial_xH|n'\rangle\langle n'|\partial_yH|n\rangle}
{\varepsilon_n-\varepsilon_{n'}}\,,
\end{equation}
with the plus sign and the single energy denominator against the minus two and squared
denominator of Eq.~\eqref{eq:kubo}; and for two bands
$m_{\rm lower}=\tfrac{1}{2\hbar}(\varepsilon_2-\varepsilon_1)\,\Om_{\rm lower}$ exactly.
The second is the natural arbiter row for the new vertex; a sign slip of the kind it
catches arose in one implementation and was caught by it.

\section{The driven protocol does not work}
\label{app:driven}

This appendix quantifies the driven route to the observable and rules it out; the
arithmetic is stated plainly rather than left implicit.

The drive must be a force, and for magnons there is a natural one: a static field
gradient. Each magnon raises the Zeeman energy by $g\mu_BB$, so a spatially varying field
is a genuine potential landscape $U(\mathbf r)=g\mu_BB(\mathbf r)$ per magnon, exerting
$\mathbf F=-g\mu_B\nabla B$, the same for every band and every $\mathbf k$. Stronger
still: the Zeeman potential is proportional to the identity in the band basis, so the
gradient rigidly shifts all bands and leaves the eigenvectors, hence the curvature and
the self-rotation moment, exactly unchanged. The drive is geometry-neutral and cannot
dress the vertices whose statistics we compute. The same apparatus supplies both knobs,
the uniform part $B_0$ setting the Zeeman gap and the gradient doing the driving, and the
mean transverse response to $\mathbf F$ is the transport coefficient with the plain
occupation weight, $\sigma^{N}_{xy}=\hbar^{-1}\sum_\lambda\int\Om_\lambda n_\lambda$, with
no $c$-function replacement, since the $c$-type weights arise only for statistical drives
or for energy-current vertices. Theorem~\ref{thm:vanish} covers the noise.

The excess transverse noise relative to the equilibrium floor is $\eta^{2}$ with
\begin{equation}
\label{eq:eta}
\eta \;=\; \frac{g\mu_B\,|\nabla B|\,a_0}{JS},
\end{equation}
the drive energy across a lattice constant measured against the exchange scale. For
Cu(1,3-bdc), with $g=1.6$, $J=0.6$~meV, $S=1/2$ and lattice constant $a_0\approx1$~nm, so
that the magnon energy unit is $JS=0.3$~meV and $g\mu_B=0.093$~meV/T, and since
$g\mu_B/\kB=1.07$~K/T a tesla per meter drives almost exactly as hard as a kelvin per
meter: $\eta^{2}\approx1\times10^{-11}$ at $|\nabla B|=10^{4}$~T/m, the scale of
patterned pole pieces or on-chip micromagnets over micron distances, and $1\times10^{-7}$
at $10^{6}$~T/m, reached only within roughly a hundred nanometers of a force-microscopy
tip, where the driven volume shrinks with the gradient. The electronic counterpart
reaches about $10^{-5}$ in a semimetal at $10^{6}$~V/m, so the magnon version is roughly
two orders harder at matched idealization. The topological part is the few-per-cent step
within the excess, so the signature sits near $2\times10^{-13}$ of the floor at
achievable gradients and near $2\times10^{-9}$ at tip gradients.

A real temperature gradient is a different and harder problem, which we delimit rather
than model. A statistical drive exerts no force, the anomalous vertex is absent, and the
mean intrinsic response arises through the subtraction machinery itself, with the $c_1$
weight of Zyuzin and Kovalev~\cite{ZyuzinKovalev} in the magnon spin channel; the
fluctuation-level form of that machinery does not exist. An estimate treating the thermal
drive as an effective force of order $\kB\nabla T$ lands on the ladder above through the
numerical equivalence just noted, but it is a surrogate rather than a derivation. Nothing
in this paper rests on it, and the statistical-drive fluctuation theory is assigned, with
the heat-channel questions, to the continuation on the framework of Ref.~\cite{GGK}.

\section{Deposited code}
\label{app:code}

The scripts accompanying this paper reproduce every number quoted.
\texttt{verify\_symbolic.py} is the symbolic suite for Theorem~\ref{thm:vanish},
Lemma~\ref{lem:chiral} and the local-equilibrium identity~\eqref{eq:LEidentity};
\texttt{m\_vertex\_kagome.py} evaluates the self-rotation vertex and its spectral shares;
\texttt{sign\_sweep\_harness.py} asserts the convention chain of Appendix~\ref{app:sign}
row by row and exhibits all three traps; \texttt{make\_figures.py} produces
Figs.~\ref{fig:shares}--\ref{fig:recovery} and \texttt{make\_fig5.py} produces
Fig.~\ref{fig:gap}, both from the same modules. The design and calibration studies exist
in two independent versions, both deposited under distinguishing names:
\texttt{design\_A.py} and \texttt{calibration\_A.py} (Tables~\ref{tab:design}
and~\ref{tab:calib}, and the temperature and field rows, computed only by this version)
against \texttt{design\_B.py} and \texttt{calibration\_B.py} (the completeness-matched
design section, the tolerance scan, the relative-noise column and the co-fit). All depend
on the two modules of the independent reimplementation, \texttt{kagome.py} and
\texttt{protocol.py}, together with \texttt{run\_suite.py}. The scripts of the final
revision round sit beside them: \texttt{raised\_calib.py} regenerates
Table~\ref{tab:calib}, \texttt{mode\_count\_gap.py} the closure analysis behind
Fig.~\ref{fig:gap}, \texttt{criterion\_test.py} the comparison of the two selection
criteria across twenty configurations and, in a second block, across the transition band
the first deliberately straddles, and \texttt{closure\_final.py} the closure battery at
$(144,48)$ on the converged $N=120$ mesh, together with \texttt{truncation.py}, the
window-truncation study of Sec.~\ref{sec:robust}. Of these, \texttt{mode\_count\_gap.py}
and \texttt{criterion\_test.py} fix the $k$-mesh at $N=90$; at one hundred and twenty
bins and beyond the mesh must be raised further, as recorded in the convergence section
of \texttt{mode\_count\_gap.py}. Both implementations of the protocol are deposited,
since they differ in nuisance basis, reference density and noise convention, and hence in
every basis-dependent number; agreement on the convention-free invariants across two
independent constructions is the evidence we rely on. One difference is deliberate and
should not be reconciled: the two place the Zeeman gap differently, one inside $H$ and
one in the occupation factor, which is a rigid spectral shift and leaves the curvature,
the self-rotation moment and every covariance unchanged.

\end{document}